
\documentclass[10pt,twocolumn,letterpaper]{article}

\usepackage{cvpr}              

%
%
\usepackage[dvipsnames]{xcolor}


%
\definecolor{cvprblue}{rgb}{0.21,0.49,0.74}
\usepackage[pagebackref,breaklinks,colorlinks,allcolors=cvprblue]{hyperref}

\usepackage{algorithm}
\usepackage{algpseudocode}

\usepackage{makecell}
\usepackage{graphicx}
\usepackage{color}
\usepackage{amsfonts}
\usepackage{amsmath}
\usepackage{amssymb}

\usepackage{bm}
\usepackage{hhline}

\usepackage{amsthm}
\usepackage{aliascnt}

\newcommand{\maketheorem}[3][theorem]{
        \expandafter\let\csname #2\endcsname\undefined
        \newaliascnt{#2}{#1}
        \newtheorem{#2}[#2]{#3}
        \aliascntresetthe{#2}
        \expandafter\def\csname#2name\endcsname~##1\null{#3~##1\null}
}

\theoremstyle{definition}

\maketheorem{proposition}{Proposition}
\maketheorem{lemma}{Lemma}
\theoremstyle{definition}
\maketheorem{assumption}{Assumption}
\theoremstyle{remark}
\maketheorem{remark}{Remark}

\usepackage{multirow}
\usepackage{multicol} 
\usepackage{booktabs}

\usepackage{colortbl} 
\usepackage{caption} 
\captionsetup[table]{skip=5pt}

\makeatletter
\def\thanks#1{\protected@xdef\@thanks{\@thanks
        \protect\footnotetext{#1}}}
\makeatother

\algblock[Block]{Block}{EndBlock}
\algblockdefx[Block]{Block}{EndBlock}%
[1]{{#1}}%
[1]{{#1}}



\definecolor{lightgreen}{RGB}{84,192,84}

\title{
    FiRe: Fixed-points of Restoration Priors for Solving Inverse Problems
}

\author{Matthieu Terris$^*$, Ulugbek S. Kamilov$^\dagger$, Thomas Moreau$^*$
\\ 
$^*$Université Paris-Saclay, Inria, CEA, Palaiseau, 91120, France\\
$^\dagger$Washington University in St.~Louis, St.~Louis, MO, USA\\
}

\begin{document}
\maketitle

\begin{abstract}
Selecting an appropriate prior to compensate for information loss due to the measurement operator is a fundamental challenge in imaging inverse problems. Implicit priors based on denoising neural networks have become central to widely-used frameworks such as Plug-and-Play (PnP) algorithms. In this work, we introduce Fixed-points of Restoration (FiRe) priors as a new framework for expanding the notion of priors in PnP to general restoration models beyond traditional denoising models. The key insight behind FiRe is that smooth images emerge as fixed points of the composition of a degradation operator with the corresponding restoration model. This enables us to derive an explicit formula for our implicit prior by quantifying invariance of images under this composite operation. Adopting this fixed-point perspective, we show how various restoration networks can effectively serve as priors for solving inverse problems. The FiRe framework further enables ensemble-like combinations of multiple restoration models as well as acquisition-informed restoration networks, all within a unified optimization approach. Experimental results validate the effectiveness of FiRe across various inverse problems, establishing a new paradigm for incorporating pretrained restoration models into PnP-like algorithms. Code available at \href{https://github.com/matthieutrs/fire}{https://github.com/matthieutrs/fire}.
\end{abstract}

\section{Introduction}

Image restoration aims to recover high-quality images from degraded measurements, encompassing fundamental low-level vision tasks such as denoising, super-resolution, and inpainting. These problems can be formulated as linear inverse problems, where one seeks to recover an image
$x\in\mathbb{R}^n$ from given observations $y\in\mathbb R^m$ obtained through
\begin{equation}
    y = Ax + e\enspace,
\label{eq:inv_pb}
\end{equation}
where $A$ is a linear operator, modeling the sensing system, and $e\in\mathbb{R}^m$ is some additive noise.
A standard approach for solving such problem consists in finding an estimate $x$ that maximizes the posterior distribution $p(x|y)$. 
Assuming that the noise $e$ is Gaussian, the Bayes' rule yields the following equivalent minimization problem:
\begin{equation}
\label{eq:bayes}
\underset{x}{\operatorname{argmin}}\,\, \frac{1}{2}\|Ax-y\|_2^2 - \log p(x),
\end{equation}
where $p$ is a \emph{prior} for the distribution of natural images.
In this context, a powerful surrogate prior $p$ is needed to ensure that a solution to \eqref{eq:bayes} is satisfying in the sense of being both visually appealing and faithful to the measurements \eqref{eq:inv_pb}.
In the modern literature, denoising neural networks have imposed themselves as powerful implicit priors.
Their popularity stems from their ability to approximate the gradient of a log prior $\nabla_x \log p(x)$, also known as the score, through the Tweedie's formula \cite{vincent2011connection, xu2020provable, laumont2022bayesian}.
Such models can be plugged within iterative optimization algorithms for solving \eqref{eq:bayes}, giving rise to the so-called Plug-and-Play (PnP) framework \cite{venkatakrishnan2013plug, pesquet2021learning, hurault2022proximal, kamilov2023pnp}.
Note that such score-matching methods are also at the heart of diffusion approaches \cite{chung2022improving, delbracio2023inversion, chung2024direct} for solving inverse problems.

From a practical perspective, PnP methods alternate between data-fidelity steps enforcing consistency with \eqref{eq:inv_pb} and denoising steps that implicitly encode natural image priors. Despite theoretical convergence results ensuring stable reconstructions \cite{xu2020provable, pesquet2021learning, hurault2022proximal, terris2023equivariant}, achieving state-of-the-art results relies heavily on heuristics such as decreasing step sizes or early stopping \cite{wang2017parameter, wei2020tuning, zhang2021plug}. This gap between theory and practice stems from a critical limitation of PnP algorithms: at convergence, denoisers trained on Gaussian-corrupted images are applied to smooth images, creating a significant distribution mismatch.

\begin{figure*}
\centering
\includegraphics[width=0.9\textwidth]{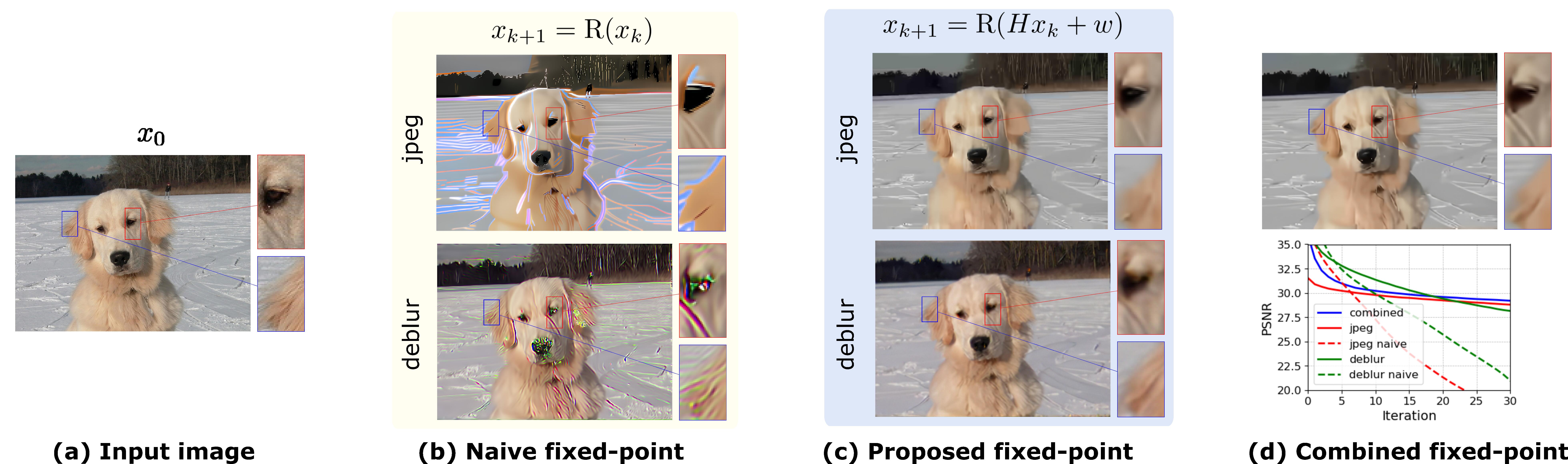}
\caption{(a) \textbf{Experimental setup.} Two restoration models trained for different tasks: $\operatorname{R}_1$ is the SCUNet from \cite{zhang2023practical} for JPEG restoration, and $\operatorname{R}_2$ is a Restormer model \cite{zamir2022restormer} fine-tuned for Gaussian deblurring. (b) \textbf{Fixed points of restoration models are not smooth images}. Direct iteration of these models in a standard PnP framework yields strong artifacts. (c) \textbf{Composition with training degradation preserves image characteristics.} For each model $\operatorname{R}_i$ and its associated degradation $D_i = H_i\cdot + w_i$, the fixed-points of $\operatorname{R}_i\circ D_i$ maintain natural image properties. (d) \textbf{Linear combinations yield improved fixed points.} Combining $\operatorname{R}_1 \circ D_1$ and $\operatorname{R}_2 \circ D_2$ produces higher quality fixed-points (top), with stable PSNR values (bottom graph, solid lines) compared to the rapidly degrading naive approach from (b) (bottom graph, dashed lines).} 
\label{fig:summary}
\end{figure*}

A natural extension of this framework is to consider other restoration models as potential implicit priors.
Recent works demonstrate that super-resolution models can be incorporated into both PnP restoration \cite{hu2023restoration} and diffusion models \cite{bansal2024cold}.
This approach is particularly appealing as it can leverage the vast collection of pretrained image restoration models available to the community.
However, this approach is not as straightforward as substituting a denoiser with a more general restoration model, as this naive approach leads to poor reconstruction quality.
Moreover, extending Tweedie's formula to general restoration models beyond denoisers is not trivial.
 
To leverage the implicit prior of restoration models in the wild, our main observation is that both denoisers and restoration models are trained to have smooth images as fixed points of the composition between the train-time image degradation (\emph{e.g.}~additive noise or linear degradation) and the restoration model. Based on this observation, we propose the Fixed-points of Restoration priors (FiRe) framework, which enables using any restoration model coupled with the degradation it was trained on as an implicit prior to solve \eqref{eq:inv_pb}.

Our contributions are as follows:
first, we demonstrate that incorporating restoration models trained on general restoration tasks requires 
to combine them with an appropriate degradation step, and show it can lead to well behaved iterative algorithms.
Second, we derive a closed-form expression for the prior associated with general restoration models, providing a theoretical connection with the PnP framework.
Third, we show that our formulation naturally accommodates additive priors, enabling both ensemble-based approaches and acquisition-aware conditioning of the restoration process.
Finally, our experiments demonstrate that the proposed framework significantly improves reconstruction quality across various inverse problems.

\section{Background and related works}

\subsection{Denoising priors and PnP algorithms}

While early methods proposed explicit priors derived from signal-processing arguments (\emph{e.g.}\ sparsity in a wavelet basis \cite{stephane1999wavelet} or of the gradient image \cite{bredies2010total}), recent progress in imaging inverse problems relies on implicit denoising-based priors and their connection to score-matching \cite{kadkhodaie2021stochastic, daras2024survey}.
Minimum Mean Square Error (MMSE) denoisers play a crucial role to implicitly probe the prior distributions of smooth images. For a given noisy sample $y= x+n$ where $x\sim p(\text{data})$ and $n\sim \mathcal{N}(0, \sigma^2)$, these operators $\operatorname{R}$ satisfy $\operatorname{R}(y) = \mathbb{E}[x|y]$.
Starting from this conditional expectation, Tweedie's formula yields \cite{reehorst2018regularization, xu2020provable, laumont2022bayesian}
\begin{equation}
\label{eq:tweedie}
\nabla_y \log p_\sigma(y) = (\operatorname{R}(y)-y)/\sigma^2,
\end{equation}
thus providing a link between the denoiser $\operatorname{R}$ and the sought prior.
In turn, defining $f(x) = \frac{1}{2}\|Ax-y\|^2$, and assuming that $-\nabla \log p$ is well approximated by a neural network $\operatorname{R}$, the popular RED approach suggests the following iteration
\begin{equation}
\label{eq:basic_red}
    x_{k+1} = \operatorname{prox}_{\lambda f}(x_k-\gamma \operatorname{R}(x_k)),
\end{equation}
where $\gamma,\lambda>0$ are stepsizes, and where the $\operatorname{prox}_{\lambda f}$ denotes the proximity operator\footnote{Defined as $\operatorname{prox}_{\lambda f}(u) =\underset{x}{\operatorname{argmin}}\,\, \lambda f(x) + \frac{1}{2}\|x-u\|_2^2$.} of $\lambda f$.

Alternative approaches exist, leveraging slightly different assumptions.
For instance, Plug-and-Play (PnP) algorithms rather assume that the denoiser $\operatorname{R}$ is the proximal operator of $-\log p$ \cite{pesquet2021learning, hurault2022proximal},
resulting in algorithms where the role of $f$ and $\operatorname{R}$ are swapped in \eqref{eq:basic_red}.
In the same spirit, the modified Half-Quadratic Splitting (HQS) scheme
\begin{equation}
\label{eq:hqs_basic}
    x_{k+1} = \operatorname{R}(\operatorname{prox}_{\lambda f}(x_k)),
\end{equation}
was shown to achieve excellent performance \cite{zhang2021plug}.

While these approaches have shown promising results, the methods in \eqref{eq:basic_red} and \eqref{eq:hqs_basic} often yield suboptimal reconstructions. 
We argue that this limitation stems from two key issues: 
first, during inference, the denoiser processes inputs that deviate significantly from its training distribution of Gaussian noise-corrupted images. This mismatch is particularly problematic at convergence, where the denoiser's action must align with the score of the inverse problem's noise to yield stable solutions. Second, and more fundamentally, smooth images are not fixed-points of denoising networks: even for simple denoising problems where $A = \operatorname{Id}$, iterating $\operatorname{R}$ in \eqref{eq:hqs_basic} leads to artifacts, similar to those shown in \autoref{fig:summary} \textcolor{cvprblue}{(b)}. In this work, we adopt the view-point of the fixed-point sequence generated by \eqref{eq:basic_red} to overcome the shortcomings of this approximate Tweedie formula as well as the distribution shift at inference time.

\subsection{Related works}

\paragraph{Implicit denoising priors}
The link between denoising autoencoders and score matching was first established in \cite{vincent2011connection}, showing that denoising autoencoders implicitly estimate the score $\nabla \log p(x)$  of the data distribution $p(x)$. Implicit denoising priors were introduced by \cite{venkatakrishnan2013plug}, who proposed replacing classical proximity operators with denoising operators in ADMM algorithms, interpreting the denoising operation as a MAP estimator. This framework was later revisited by \cite{reehorst2018regularization, xu2020provable}, who formalized the connection through Tweedie's formula \eqref{eq:tweedie} for MMSE denoisers and provided a proximal operator interpretation. Subsequent work by \cite{laumont2022bayesian} deepened the theoretical understanding of the underlying distributions and proposed a PnP version of Langevin sampling. Alternative viewpoints emerged through the study of monotone operators \cite{sun2019onlinepnp, pesquet2021learning}, and modifications of the denoising neural network architectures for nonconvex priors \cite{hurault2022proximal}. Our theoretical analysis diverges from previous approaches that rely on Lipschitz properties \cite{pesquet2021learning, hurault2022proximal, xu2020provable}, instead building upon fixed-point arguments \cite{cohen2021regularization}.

\paragraph{General restoration priors}
The extension of implicit priors beyond denoising remains relatively unexplored. The early work \cite{zhang2019deep} considered super-resolution models as priors for image super-resolution. In the slightly different context of generative modeling trough diffusion, the authors of \cite{bansal2024cold} proposed to replace the denoising diffusion process with a general restoration diffusion process.  Yet, these works did not establish a clear link with the variational approach and namely no explicit prior was derived.
Recently, significant progress was made in DRP \cite{hu2023restoration}: after replacing denoisers with super-resolution models, the authors were able to derive a Tweedie-like formula under specific degradation assumptions.
The subsequent ShaRP algorithms \cite{hu2024stochastic} waived specific restrictions but leveraged the same prior formula.
While these works made it possible to use a broader range of restoration networks, they are not drop-in replacements of denoisers in PnP and can fail when they are not suited to a given task.

\paragraph{Expected and ensembled priors} Recent works have explored the benefits of considering priors in expectation. In \cite{renaud2024plug}, the authors introduced the concept of expected implicit priors, where at the algorithmic level, a noisy sample of the score is used at each step.
Recently, ShaRP \cite{hu2024stochastic} extended DRP to handle random degradation models. In a different direction, \cite{terris2023equivariant} leveraged equivariant properties of the denoising neural network, leading to averaged invariant priors. A common point of these algorithms is, however, that random operations are introduced at each step of the algorithm to slightly perturbate the behaviour of the denoiser.
While ensembling techniques are relatively unexplored in image restoration, with few exceptions such as \cite{jiang2019ensemble}, the recent work by \cite{sun2024ensir} has investigated post-training ensembling strategies for end-to-end neural network reconstructions.

\section{Proposed framework}

\subsection{Fixed points of restoration models}

A central observation to our work is that image restoration networks inherently possess a fixed-point property with respect to smooth images. More precisely,
consider a general restoration problem $y = Hx + w$, where $H$ is a linear operator and $w$ denotes additive noise. In the standard supervised setting, image restoration models are trained by minimizing an objective of the form
\begin{equation}
\label{eq:sup_training}
\mathcal{L}(\theta) = \mathbb{E}_{x\sim p_\text{data}, w\sim \mathcal{W}}\left[\|\operatorname{R}_\theta(Hx + w)-x\|\right],
\end{equation}
where $p_\text{data}$ denotes the distribution of natural images, $\mathcal{W}$ is the noise distribution, and $\|\cdot\|$ typically denotes the $\ell_2$ or $\ell_1$ norm.
This loss leads to an important property: let $C = \{ x\in\mathbb{R}^n ; T(x) = x\}$ denote the set of fixed-points of $T = \operatorname{R} \circ D$ with $D = (H \cdot + w)$ the degradation operator. 
By minimizing the training objective \eqref{eq:sup_training}, 
one implicitly ensures that restored images lie close to the set $C$.
Moreover, images restored with \eqref{eq:sup_training} tend to be smooth \cite{blau2018perception, ohayon2024perception}.
This observation suggests an approximate idempotence property $T^2\approx T$, motivating the following assumption:

\begin{assumption}
\label{ass:proj}
The operator $T = \operatorname{R} \circ D$ can be expressed as a projection $T=\operatorname{proj}_C$ onto a closed, prox-regular set $C$.
\end{assumption}

\noindent In this assumption, we have used the following definition: a closed set $C$ is prox-regular if each point in $C$ has a neighborhood where the projection onto $C$ is single-valued \cite{lewis2009local}. Although the prox-regularity of $C$ might be too restrictive for smooth images, this assumption is weaker than convexity and it is satisfied by a broad class of sets including smooth submanifolds of $\mathbb{R}^n$ \cite{poliquin2010calculus}. The following result, from \cite{lewis2009local}, allows to link the projection on the set $C$ to the gradient of a smooth function.

\begin{proposition}
Under \autoref{ass:proj}, around any point $x$ at which the set $C$ is prox-regular, the operator $T$ satisfies $T(x) = x-\frac{1}{2}\nabla d_C^2(x)$, where $d_C$ denotes the distance function to the set $C$, i.e. $d_C(x) = \underset{u\in C}{\text{inf}}\,\,\|x-u\|$. Furthermore, $\nabla d_C^2$ is 2-Lipschitz.
\label{prop:grad}
\end{proposition}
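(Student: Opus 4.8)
The plan is to recognize the function $\tfrac12 d_C^2$ as the Moreau envelope of the indicator $\iota_C$ of the set $C$ and to exploit the regularity this envelope inherits from the prox-regularity of $C$. Writing $\iota_C(u)=0$ if $u\in C$ and $+\infty$ otherwise, one has by definition
\begin{equation*}
\tfrac12 d_C^2(x) = \min_{u}\Big\{\iota_C(u)+\tfrac12\|x-u\|^2\Big\},
\end{equation*}
so that $\tfrac12 d_C^2$ is exactly the Moreau envelope of $\iota_C$ at parameter $1$, whose unique minimizer is $\operatorname{prox}_{\iota_C}(x)=\operatorname{proj}_C(x)$. Under \autoref{ass:proj}, $C$ is prox-regular, which guarantees that near $C$ the projection is single-valued and the envelope is continuously differentiable; this is precisely the content of the cited results \cite{lewis2009local, poliquin2010calculus}.

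First I would establish the gradient formula. The cleanest elementary route is to introduce
\begin{equation*}
g(x) = \tfrac12\|x\|^2 - \tfrac12 d_C^2(x) = \sup_{u\in C}\Big\{\langle x,u\rangle - \tfrac12\|u\|^2\Big\},
\end{equation*}
which, being a supremum of affine functions of $x$, is convex, with gradient equal to the maximizing $u$, i.e. $\nabla g(x)=\operatorname{proj}_C(x)$ wherever the projection is single-valued. Differentiating the identity $\tfrac12 d_C^2(x)=\tfrac12\|x\|^2-g(x)$ then yields $\tfrac12\nabla d_C^2(x)=x-\operatorname{proj}_C(x)$. Invoking \autoref{ass:proj} to substitute $T=\operatorname{proj}_C$ gives $\tfrac12\nabla d_C^2(x)=x-T(x)$, which rearranges to the claimed identity $T(x)=x-\tfrac12\nabla d_C^2(x)$.

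For the Lipschitz bound I would use that $\nabla\!\big(\tfrac12 d_C^2\big)=\operatorname{Id}-\operatorname{proj}_C$ and argue this map is nonexpansive. On the neighborhood where $C$ is prox-regular the projection is firmly nonexpansive, and for any firmly nonexpansive operator $P$ the complementary map $\operatorname{Id}-P$ is again firmly nonexpansive (the defining inequality $\|Px-Py\|^2\le\langle Px-Py,x-y\rangle$ transfers verbatim to $\operatorname{Id}-P$), hence $1$-Lipschitz. Therefore $\nabla d_C^2=2(\operatorname{Id}-\operatorname{proj}_C)$ is $2$-Lipschitz on this neighborhood, as claimed.

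The main obstacle lies entirely in the prox-regular bookkeeping rather than in these formal computations: for a nonconvex set the equivalences above are only local and hold only for envelope parameters below a threshold tied to the prox-regularity radius, so one must verify that the relevant neighborhood of $x$ sits inside the region where the projection is single-valued and firmly nonexpansive. Rather than redevelop this calculus, I would quote the differentiability and firm-nonexpansiveness statements for projections onto prox-regular sets from \cite{lewis2009local, poliquin2010calculus}, on which \autoref{ass:proj} already rests, and restrict all of the above to that neighborhood.
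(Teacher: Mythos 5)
Your first half is sound, and in fact supplies more detail than the paper itself, which imports this proposition from \cite{lewis2009local} without proof. The Asplund-function decomposition $\tfrac12 d_C^2(x)=\tfrac12\|x\|^2-g(x)$ with $g(x)=\sup_{u\in C}\{\langle x,u\rangle-\tfrac12\|u\|^2\}$ convex, combined with a Danskin-type argument (the maximizers defining $g$ are exactly the nearest points of $C$ to $x$, the sup is locally attained on a bounded set so $\partial g(x)=\operatorname{conv}\operatorname{proj}_C(x)$, and a finite convex function is differentiable precisely where its subdifferential is a singleton), correctly yields $\tfrac12\nabla d_C^2=\operatorname{Id}-\operatorname{proj}_C=\operatorname{Id}-T$ wherever prox-regularity makes the projection single-valued. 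This is essentially the route taken in the literature the paper cites.

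The Lipschitz half, however, rests on a false claim: the projection onto a \emph{nonconvex} prox-regular set is not firmly nonexpansive, nor even nonexpansive, on the tube where it is single-valued. Take $C$ to be the unit circle in $\mathbb{R}^2$ (prox-regular with radius $1$): for two points $x,y$ at radius $0<\rho<1$ one has $\operatorname{proj}_C(x)=x/\rho$, so $\|\operatorname{proj}_C(x)-\operatorname{proj}_C(y)\|=\|x-y\|/\rho>\|x-y\|$, with constant $1/\rho$ blowing up toward the center of curvature. Consequently there is no ``firm-nonexpansiveness statement for projections onto prox-regular sets'' in \cite{lewis2009local,poliquin2010calculus} for you to quote—your fallback does not exist—and the transfer argument to $\operatorname{Id}-\operatorname{proj}_C$ collapses. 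The correct replacement is the quantitative proximal-normal inequality: if $C$ is prox-regular with radius $r$, then for $p=\operatorname{proj}_C(x)$, $q=\operatorname{proj}_C(y)$ one has $\langle x-p,\,q-p\rangle\le\tfrac{d_C(x)}{2r}\|p-q\|^2$ and symmetrically in $y$; summing gives the \emph{relaxed} firm nonexpansiveness $\langle x-y,\,p-q\rangle\ge(1-\alpha)\|p-q\|^2$ with $\alpha=\bigl(d_C(x)+d_C(y)\bigr)/(2r)$, and expanding $\|(x-p)-(y-q)\|^2=\|x-y\|^2-2\langle x-y,p-q\rangle+\|p-q\|^2\le\|x-y\|^2-(1-2\alpha)\|p-q\|^2$ shows that $\operatorname{Id}-\operatorname{proj}_C$ is $1$-Lipschitz, hence $\nabla d_C^2$ is $2$-Lipschitz, only on the tube $\{d_C\le r/2\}$. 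The circle example shows this half-radius restriction is sharp (the Jacobian eigenvalue $1-1/\rho$ of $\operatorname{Id}-\operatorname{proj}_C$ exceeds $1$ in magnitude once $\rho<1/2$), so the ``neighborhood'' in the proposition must be understood as this smaller tube; with that computation, or a direct citation of the $C^{1+}$ property of $d_C^2$ for prox-regular sets, your proof is repaired.
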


\noindent As a consequence, under \autoref{ass:proj}, the restoration model composed with the associated degradation model naturally writes as the gradient of an explicit functional, namely the distance to the fixed-points of $T = \operatorname{R}\circ D$.

Interestingly, this analysis suggests that restoration models only exhibit meaningful fixed-points behaviour when composed with the degradation operator for which they have been trained.
This phenomenon is illustrated in \autoref{fig:summary}.
It shows the evolution of the PSNR and the end points of the sequences $x_{k+1} = \operatorname{R}(x_k)$ and $x_{k+1} = \operatorname{R}(Hx_k + w_k)$, for two pairs of restoration model / degradation model.
On the top row, we choose $\operatorname{R}$ as the SCUNet trained on multiple degradations, and compose it with the (non-linear) JPEG degradation operator $H$.
In the bottom row, we choose a Restormer model trained on Gaussian deblurring and choose $H$ as a Gaussian blur operator.
We observe that the sequence $x_{k+1} = \operatorname{R}(x_k)$ quickly yields images with strong artifacts and low PSNR, while the sequence motivated by our analysis $x_{k+1} = \operatorname{R}(Hx_k+w_k)$ yields stable behaviours in both cases, with convergence towards an artifacts-free natural image.

In practice, restoration models only approximately minimize \eqref{eq:sup_training} over a set of natural images. Therefore, the fixed point set associated to one restoration model may contain slightly degraded images, or smooth images might only be in its neighborhood.
To overcome this difficulty, we propose to rely on an ensemble of restoration priors.
More precisely, for a given restoration model $\operatorname{R}_\xi$ and degradation $D_\xi$, we denote $C_\xi$ the associated fixed-point set.
While $C_\xi$ may contain spurious fixed-points due to the approximate nature of the training, these spurious elements are unlikely to be shared with sets $C_{\xi'}$ associated with models $\operatorname{R}_{\xi'}$ trained for different tasks $D_{\xi'}$.
On the contrary, we expect that smooth images should lie in the neighborhood of all these sets.
By considering the intersection $\bigcap_\xi C_\xi$, we can obtain a better approximation of the set $C$ containing smooth images, as this intersection filters out model-specific artifacts.
Following \autoref{prop:grad}, this naturally leads to considering the log-prior $\mathbb{E}_{\xi}[d_{C_\xi}^2(x)]$, which measures how far an image $x$ is from the intersection of the sets $C_\xi$.
With this log-prior, even if the sets' intersection is empty, the maximum likelihood points lie the closest possible to each of these sets.

\subsection{Proposed FiRe framework}
\label{sect:proposed}

In order to find inverse problem solutions close to the fixed point sets $C_\xi$ of the degradation-restoration models $\operatorname{R}_\xi\circ D_\xi$,
we propose the Fixed-points of Restoration priors (FiRe) framework. In essence, the FiRe framework leverages the distance to these fixed-point sets in \eqref{eq:bayes} as:
\begin{equation}
\label{eq:general_min_exp}
x^* = \underset{x}{\text{argmin}}\,\, \lambda f(x) + \frac{\gamma}{2}\mathbb{E}_{\xi \sim \Xi}\left[d_{C_\xi}^2(x)\right],
\end{equation}
where $\lambda>0$ and $0 < \gamma < 1$ are hyperparameters and $\Xi$ represents the sampling distribution over different pairs of degradation-restoration models.
Therefore, our framework uses the variational formulation \eqref{eq:bayes} with the associated prior
\begin{equation}
p(x) \propto \exp\left(-\small{\frac{1}{2}}\mathbb{E}_{\xi \sim \Xi}\left[d_{C_\xi}^2(x)\right]\right),
\end{equation}
establishing a connection with traditional PnP formulations.

To minimize \eqref{eq:general_min_exp}, \autoref{prop:grad} provides an explicit expression for the gradient of each restoration prior.
Indeed, the set of fixed points $C_\xi$ is implicitly defined for each pair of restoration model $\operatorname{R}_\xi$ and its associated degradation operator $D_\xi$, and we have $\frac12\nabla d_{C_\xi}^2(x) = x - \operatorname{R}_\xi(D_\xi(x))$.
Based on this expression, we propose the FiRe-HQS algorithm, described in \autoref{alg:sto_sgd}, for solving \eqref{eq:general_min_exp}.
At its core, FiRe extends the PnP framework by incorporating these degradation steps before applying each restoration model and enforcing data consistency through a proximal step.
By sampling at each iteration a degradation-restoration model, we obtain cheap stochastic updates toward the overall objective.
This algorithm also supports averaging multiple parallel updates originating from varied pairs of restoration models and their associated degradation operators, improving the prior quality.
The core difference between this algorithm and traditional PnP methods is that the composition of restoration and degradation operators ensures that restoration models' inputs match the distribution used to train the model.
This justifies and generalizes the recently proposed noise injection strategy~\cite{renaud2024plug} for denoisers.

\begin{algorithm}[t]
\caption{FiRe-HQS algorithm}
\label{alg:sto_sgd}
\begin{algorithmic}[1]
\State \textbf{Input:} $x_0$, $\gamma_n$, $\lambda$, restoration models $(\operatorname{R}^1, \hdots, \operatorname{R}^N)$ associated to degradations $(\mathcal{D}^1\!,  \hdots, \mathcal{D}^N)$.
\For{$k=1, \hdots, K$}
    \For{$n=1, \hdots, N$}
    \State Select restoration model $\operatorname{R}^n$
    \State $\text{Sample } (H_k^n, w_k^n) \sim \mathcal{D}^n$
    \State $r_k^n = x_k-\operatorname{R}^n(H_k^nx_k+w_k^n)$
    \EndFor
    \State $u_{k} = x_k- \sum_{n=1}^N\gamma_n r_k^n$ 
    \State $x_{k+1} = \operatorname{prox}_{\lambda f}(u_k)$ 
\EndFor
\State \textbf{Return:} $x_{k+1}$
\end{algorithmic}
\end{algorithm}
\vspace{-1em}

\begin{table*}[t]
	\centering
    \resizebox{\textwidth}{!}{
	\begin{tabular}{lllcccccccccccccccccc}
		\toprule
		\multirow{2}{*}{{Dataset}} & \multirow{2}{*}{{Method}} & \multirow{2}{*}{{Restoration prior}} & \multicolumn{3}{c}{Gaussian Blur} & \multicolumn{3}{c}{Motion Blur} & \multicolumn{3}{c}{SRx4} & \multirow{2}{*}{{Iters}} \\ 
        \cmidrule(lr){4-6} \cmidrule(lr){7-9} \cmidrule(lr){10-12}
        & & & PSNR$\uparrow$&\!\!\!SSIM$\uparrow$\!\!\!&LPIPS$\downarrow$ & PSNR$\uparrow$&\!\!\!SSIM$\uparrow$\!\!\!&LPIPS$\downarrow$ & PSNR$\uparrow$&\!\!\!SSIM$\uparrow$\!\!\!&LPIPS$\downarrow$ & \\
        \specialrule{\lightrulewidth}{1pt}{0pt}
        \rowcolor{blue!5}
        & DRP & SRx2 + SRx3 & \underline{25.48} & \underline{0.76} & \underline{0.30} &  25.38 & 0.75 & 0.30 &  23.23 & 0.67 & 0.44 & 30 \\
        \rowcolor{blue!5}
        & \textbf{FiRe-HQS} & denoising + SR + deblur &  \textbf{25.80} & \textbf{0.77} & \textbf{0.30} & \textbf{30.49} & \textbf{0.91} & \textbf{0.10} & \textbf{23.92} & \textbf{0.69} & \textbf{0.39} & 30\\[1pt]
        \cline{2-13}
        \rowcolor{blue!5}
        & DPIR & denoising & 25.18 & 0.74 & 0.40 & \underline{30.39} & \underline{0.88} & 0.17 & 23.60 & \underline{0.68} & 0.46 & 20 \\
        \rowcolor{blue!5}
        \multirow{-4}{*}{Imnet100} & \multirow{1}{*}{DiffPIR} & denoising & 25.32 & 0.73 & 0.36 & 29.70 & 0.86 & \underline{0.12} & \underline{23.89} & 0.66 & \underline{0.41} & 20  \\[1pt]
        \specialrule{\lightrulewidth}{0pt}{0pt}
        \rowcolor{orange!10}
        & \multirow{1}{*}{DRP} & SRx2 + SRx3 & 26.18 & \underline{0.79} & \underline{0.29} & 26.05 & 0.78 & 0.28 & 23.65 & 0.69 & 0.44 & 30 \\
        \rowcolor{orange!10}
        & \multirow{1}{*}{\textbf{FiRe-HQS}} & denoising + SR + deblur & \textbf{27.00} & \textbf{0.80} & \textbf{0.27} & \textbf{31.67} & \textbf{0.91} & \textbf{0.10} & \underline{25.15} & \underline{0.71} & \underline{0.39} & 30 \\[1pt]
        \cline{2-13}
        \rowcolor{orange!10}
        & \multirow{1}{*}{DPIR} & denoising & 26.14 & 0.77 & 0.40 & \underline{30.95} & 0.87 & 0.20 & 24.64 & 0.70 & 0.46 & 20 \\
        \rowcolor{orange!10}
        \multirow{-4}{*}{BSD20}  & \multirow{1}{*}{DiffPIR} & denoising & \underline{26.67} & 0.76 & 0.36 & 30.85 & \underline{0.88} & \underline{0.13} & \textbf{26.18} & \textbf{0.79} & \textbf{0.29} & 20 \\[1pt]
        \specialrule{\lightrulewidth}{0pt}{0pt}
	\end{tabular}}
	\caption{Image restoration results on different problems with various algorithms on (\emph{top}) Imnet100 test set and (\emph{bottom}) BSD20 subset. Here, the proposed FiRe-HQS algorithm is implemented with a combination of denoising (SCUNet), SR$\times$2 (SwinIR$\times$2), and Gaussian deblurring (Restormer) restoration models. We compare it to another restoration prior (DRP) and two PnP algorithms (DPIR and DiffPIR).}
	\label{tab:results_combined}
\end{table*}

\paragraph{Difference with DRP \& ShaRP}
The use non-denoising restoration models as priors for inverse problems was first proposed in the DRP algorithm \cite{hu2023restoration} and its recent extension ShaRP \cite{hu2024stochastic}.
Both derive from the following prior
\begin{equation}
p(x) \propto \mathbb{E}_{s \sim G_\sigma(s-Hx), H \sim p_H} \left[ -\log p(s|H)\right],
\end{equation}
where $p_H$ is the distribution of degradation operators and $G_\sigma$ is the Gaussian density function.
The key difference lies in the corresponding score function
\begin{equation}
\label{eq:sharp_gradient}
-\nabla_x\log p(x) \propto H^\top H(x-\operatorname{R}(Hx + w)),
\end{equation}
where $\operatorname{R}$ is the restoration operator. While ShaRP directly uses this gradient, DRP exhibits a similar $H^\top H$ term which is critical for its scaled-proximal operator interpretation. Yet, this confines the information obtained through the restoration-based prior to $\ker(H)^\perp$. This limitation is restrictive since it is precisely within $\ker(H)$ that the model $\operatorname{R}$ expresses the learned implicit prior, as demonstrated in our experiments (see \emph{e.g.} \autoref{fig:inpainting_comparison}).

\section{Properties of the FiRe framework}

We next provide theoretical properties for the proposed algorithm, in both deterministic and stochastic settings.

\subsection{Deterministic restoration priors}

In this section, we consider a finite-sum version of \eqref{eq:general_min_exp}, which occurs for instance in the case of a finite number of restoration models $(\operatorname{R}^n)_{1\leq n\leq N}$ associated to a finite number of deterministic degradation models $(H^n)_{1\leq n \leq N}$. In this context, \eqref{eq:general_min_exp} rewrites as
\begin{equation}
\label{eq:general_finitesum_exp}
x^* = \underset{x}{\text{argmin}}\,\, \lambda f(x)
    + \frac1{2}\sum_{n=1}^N \gamma_nd_{C_n}^2(x),
\end{equation}
where $C_n$ is the fixed-point set associated to the pair $(\operatorname{R}^n, H^n)$.
Furthermore, at each iteration $k$, the degradation operator is fixed, \emph{i.e.} $H_k^n = H^n$, and we set $w_k^n = 0$.

\begin{proposition}
Assume that, for all $n\in\{1,\hdots, N\}$, $\operatorname{R}^n$ satisfies \autoref{ass:proj} holds for some $C_n$ that is uniformly prox-regular.
Assume furthermore that, for all $n$, $\gamma_n \in [0, 1]$ and $\sum_{n=1}^N\gamma_n < 1$.
Then the sequence $(x_k)_{k\in\mathbb{N}}$ generated by \autoref{alg:sto_sgd} converges to a point $x^*$ satisfying \eqref{eq:general_finitesum_exp}.
\end{proposition}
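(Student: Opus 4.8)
The plan is to recognize \autoref{alg:sto_sgd}, in this deterministic regime, as a unit-step forward--backward splitting and then close the argument with fixed-point theory for (locally) averaged operators. First I would invoke \autoref{prop:grad}: since $T^n = \operatorname{R}^n \circ D^n = \operatorname{proj}_{C_n}$ and $w_k^n=0$, the residual formed in the algorithm is exactly
\[
r_k^n = x_k - \operatorname{R}^n(H^n x_k) = x_k - \operatorname{proj}_{C_n}(x_k) = \tfrac12 \nabla d_{C_n}^2(x_k).
\]
Setting $g(x) = \tfrac12 \sum_{n=1}^N \gamma_n d_{C_n}^2(x)$, the inner update collapses to $u_k = x_k - \nabla g(x_k)$, so that $x_{k+1} = \operatorname{prox}_{\lambda f}(x_k - \nabla g(x_k))$. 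Thus the iteration is forward--backward splitting with unit step on the objective $\lambda f + g$, whose stationary points are precisely the solutions of \eqref{eq:general_finitesum_exp}.

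Next I would assemble the two ingredients needed for convergence. On the data-fidelity side, $f$ is a convex quadratic, so $\operatorname{prox}_{\lambda f}$ is firmly nonexpansive. On the prior side, I would rewrite the gradient step as
\[
\operatorname{Id} - \nabla g = (1-\Gamma)\,\operatorname{Id} + \sum_{n=1}^N \gamma_n \operatorname{proj}_{C_n}, \qquad \Gamma := \sum_{n=1}^N \gamma_n < 1,
\]
exhibiting it as a convex combination of the identity and the projections. By \autoref{prop:grad} each $\nabla d_{C_n}^2$ is $2$-Lipschitz, hence $\nabla g$ is $\Gamma$-Lipschitz; the hypothesis $\Gamma < 1$ is exactly what renders the unit step admissible and yields a strict sufficient-decrease estimate for $\lambda f + g$ along the iterates.

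If the $C_n$ were convex, each $\operatorname{proj}_{C_n}$ would be firmly nonexpansive, the convex combination $\operatorname{Id} - \nabla g$ would inherit firm nonexpansiveness, and composing with $\operatorname{prox}_{\lambda f}$ would give an averaged operator; the Krasnoselskii--Mann theorem would then deliver convergence of $(x_k)$ to a fixed point. The real work is to recover this conclusion under the weaker uniform prox-regularity hypothesis, where $\operatorname{proj}_{C_n}$ is only single-valued and nonexpansive on a tube $\{d_{C_n} < r\}$ of uniform radius $r$. I would therefore use the sufficient-decrease estimate to show that the iterates are drawn toward the sets and eventually enter and remain in the intersection of these tubes, and within that region reproduce the averagedness argument to obtain an asymptotically regular, (locally) nonexpansive iteration converging to some $x^*$.

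Finally, passing to the limit in the fixed-point relation $x^* = \operatorname{prox}_{\lambda f}(x^* - \nabla g(x^*))$ and using the subdifferential characterization $u - \operatorname{prox}_{\lambda f}(u) \in \lambda \partial f(\operatorname{prox}_{\lambda f}(u))$ yields $0 \in \lambda \partial f(x^*) + \nabla g(x^*)$, the first-order optimality condition for \eqref{eq:general_finitesum_exp}. The main obstacle is precisely this passage from the convex template to the prox-regular setting: since prox-regularity is local and the functions $d_{C_n}^2$ need not be convex, global nonexpansiveness fails and Krasnoselskii--Mann cannot be applied off the shelf. Controlling the trajectory so that it stays inside the uniform prox-regularity tube---ensuring the combined step $\Gamma$ never pushes an iterate out---is where uniform prox-regularity and the constraint $\Gamma < 1$ must be used in tandem, and this is the crux of the argument.
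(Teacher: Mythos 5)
Your reduction of the algorithm to forward--backward splitting is exactly right and matches the paper's first step: using Proposition~\ref{prop:grad} with $w_k^n=0$ to write $r_k^n = \tfrac12\nabla d_{C_n}^2(x_k)$, hence $x_{k+1} = \operatorname{prox}_{\lambda f}\bigl(x_k - \nabla g(x_k)\bigr)$ with $g = \tfrac12\sum_n \gamma_n d_{C_n}^2$, and noting that $\sum_n\gamma_n<1$ together with the $2$-Lipschitz gradients makes the unit step admissible. Where you diverge is the convergence mechanism, and that is where your proof has a genuine gap. You propose a Krasnoselskii--Mann / averaged-operator argument, concede that it only works off the shelf when the $C_n$ are convex, and then defer the nonconvex case to a local argument inside prox-regularity tubes --- but that local argument is never actually carried out, and as sketched it would not go through. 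Two concrete problems: (i) the projection onto a uniformly prox-regular set is \emph{not} nonexpansive, even locally; on the tube $\{d_{C_n} < s\}$ with $s$ below the prox-regularity radius $r$ it is only Lipschitz with a constant $r/(r-s) > 1$, so "reproducing the averagedness argument" in the tube is not a routine localization, and without (firm) nonexpansiveness you lose the Fej\'er monotonicity that Krasnoselskii--Mann needs to convert asymptotic regularity into convergence of the whole sequence. (ii) Your claim that the sufficient-decrease estimate forces the iterates to enter and remain in the intersection of the tubes is unjustified: decrease of $\lambda f + g$ does not imply that the distance terms $d_{C_n}(x_k)$ become small (the limiting objective value may well be dominated by the data-fidelity term, or simply not small), so the trajectory may never reach the region where your local analysis would apply.

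The paper sidesteps all of this by invoking the nonconvex forward--backward convergence theory of Attouch, Bolte and Svaiter \cite{attouch2013convergence} (their Theorem~5.1): once one has a convex $f$, a smooth possibly nonconvex $g$ with Lipschitz gradient, and a step size below $1/L$ (guaranteed here by $\gamma<1$), that theorem --- built on sufficient decrease plus the Kurdyka--\L ojasiewicz property rather than on nonexpansiveness --- yields convergence of the full sequence to a critical point. That is precisely the ingredient your outline is missing; citing such a descent-plus-KL result (or supplying an equivalent argument) is what would close your proof, whereas the fixed-point route you chose appears to be a dead end under mere prox-regularity.
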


\begin{proof}
Without loss of generality, we assume that $N=1$. \autoref{prop:grad} allows us to rewrite line 6 of \autoref{alg:sto_sgd} as 
\begin{equation}
\begin{aligned}
    u_{k} &= \gamma \operatorname{proj}_C(x_k) + (1-\gamma)x_k \\
    &= \gamma (x_k-\frac{1}{2}\nabla d_C^2(x_k))+(1-\gamma)x_k \\
    &= x_k - \frac{\gamma}{2} \nabla d_C^2(x_k).
\end{aligned}
\end{equation}
Therefore, \autoref{alg:sto_sgd} rewrites $x_{k+1} = \operatorname{prox}_{\lambda f}(x_k-\frac{\gamma}{2} \nabla d_C^2(x_k))$. $f$ is convex and \cite{lewis2009local} yields that $d_C^2(x_k)$ has Lipschitz gradient. Therefore, with $\gamma < 1$, we are in the setting of \cite[Theorem 5.1]{attouch2013convergence} and the result follows.
\end{proof}

We stress that, when the intersection $\bigcap_n C_n \neq \emptyset$ and additional regularity condition on the intersection hold, a version of Algorithm \ref{alg:sto_sgd} with $f=0$ would yield a point in the intersection of all sets \cite[Theorem 3.5]{attouch2013convergence}, \emph{i.e.} in our case, the fixed points of all restoration networks. In other words, when the restoration networks share at least one common fixed point, our method can naturally discover such a point by iteratively applying each network's transformation.

\subsection{Expected prior}
\label{sect:expected_prior}

In practice, except for particular restoration networks, the models $\operatorname{R}_\theta$ are often trained to solve a class of problems. For instance, for various blurs generated at random, or different noise levels.
In this case, the supervised training loss \eqref{eq:sup_training} takes the form
\begin{equation}
    \widetilde{\mathcal{L}}(\theta) = \mathbb{E}_{x\sim p_\text{data}}
    \mathbb{E}_{(H,w)\sim\mathcal{D}}\|\operatorname{R}_\theta (Hx+w)-x\|,
\end{equation}
with the measurement $H$ and the noise $w$ generated at random from a class of degradation $\mathcal{D}$ \cite{zhang2021plug, zhang2023practical}.
Therefore, the set of fixed points $C$ is defined in expectation over the degradations $(H, w) \sim \mathcal D$.

Results from \autoref{prop:grad} can be extended to this averaged case as follows: we assume that there exists a closed, prox-regular set $C$ such that
\begin{equation}
\label{eq:exp_grad}
\begin{aligned}
\frac{1}{2}\nabla d_C^2(x) = x - \mathbb{E}_{(H,w)\sim \mathcal D}[\operatorname{R}(Hx + w)].
\end{aligned}
\end{equation}
Using random realisations of $H_k$ and $w_k$ at each iteration $k$, we can thus build stochastic estimate $g_k$ of \eqref{eq:exp_grad} as
\begin{equation}
\label{eq:grad_sto}
    g_k = x_k - \operatorname{R}(H_kx_k+w_k).
\end{equation}
In this case \autoref{alg:sto_sgd} reduces to an instance of the stochastic proximal gradient descent algorithm. The following result, which is made more precise in the Appendix, can be derived from \cite{li2022unified}.

\begin{proposition}
Assume that for all $n$ and $k$, $g_k^n$ is a unbiased estimate of $\mathbb{E}_\xi[\nabla d_{C_\xi}^2(x,\xi)]$ with bounded variance; assume furthermore that the stepsizes $(\gamma_k^n)_{k\in\mathbb{N}}$ are decreasing and non-summable.
Defining the residual function $F(x) = x-\operatorname{prox}_{\lambda f}(x-\frac{1}{2}d_C^2(x))$, we have that $\mathbb{E}[F(x_k)]\underset{k\to\infty}{\longrightarrow} 0$.
\end{proposition}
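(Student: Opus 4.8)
The plan is to recognize \autoref{alg:sto_sgd} in this averaged regime as a stochastic proximal gradient method applied to the composite objective $\Phi(x) = \lambda f(x) + g(x)$, where $g(x) = \tfrac12\mathbb{E}_\xi[d_{C_\xi}^2(x)]$ is the smooth (but generally nonconvex) regularizer and $f$ is convex. By \autoref{prop:grad}, each $\nabla d_{C_\xi}^2$ is $2$-Lipschitz, so $\nabla g = \mathbb{E}_\xi[\tfrac12\nabla d_{C_\xi}^2]$ is $L$-Lipschitz with $L\le 1$ (the expectation preserving the modulus, once the interchange $\nabla\mathbb{E}_\xi = \mathbb{E}_\xi\nabla$ is justified by dominated convergence). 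The iteration reads $x_{k+1} = \operatorname{prox}_{\lambda f}(x_k - \gamma_k g_k)$ with $g_k$ an unbiased, bounded-variance estimate of $\nabla g(x_k)$, and the residual $F(x) = x - \operatorname{prox}_{\lambda f}(x - \tfrac12\nabla d_C^2(x))$ is the gradient mapping, whose zeros are exactly the stationary points of $\Phi$.

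First I would establish a one-step descent inequality. $L$-smoothness of $g$ gives $g(x_{k+1}) \le g(x_k) + \langle\nabla g(x_k), x_{k+1}-x_k\rangle + \tfrac{L}{2}\|x_{k+1}-x_k\|^2$, while the optimality condition of the proximal step, combined with convexity of $f$, yields $\lambda f(x_{k+1}) \le \lambda f(x_k) + \tfrac{1}{\gamma_k}\langle x_k - \gamma_k g_k - x_{k+1}, x_{k+1}-x_k\rangle$. Adding these and writing the stochastic error $e_k = \nabla g(x_k) - g_k$ gives
\begin{equation*}
\Phi(x_{k+1}) \le \Phi(x_k) - \Big(\tfrac{1}{\gamma_k} - \tfrac{L}{2}\Big)\|x_{k+1}-x_k\|^2 + \langle e_k, x_{k+1}-x_k\rangle.
\end{equation*}
Taking the conditional expectation given the past, the cross term vanishes in expectation up to a variance contribution, controlled by Young's inequality and the bounded-variance hypothesis, producing a bound of the form $\mathbb{E}[\Phi(x_{k+1})\mid\mathcal{F}_k] \le \Phi(x_k) - c\,\gamma_k\,\|F(x_k)\|^2 + c'\gamma_k^2\sigma^2$ for small enough $\gamma_k$, after relating $\|x_{k+1}-x_k\|$ to the exact gradient mapping $\|F(x_k)\|$.

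Next I would telescope. Summing over $k$, using that $\Phi$ is bounded below (since $f\ge 0$ and $d_C^2\ge 0$) and that $(\gamma_k)$ is square-summable while non-summable, the variance terms $\sum_k\gamma_k^2\sigma^2$ are finite, so $\sum_k\gamma_k\,\mathbb{E}[\|F(x_k)\|^2] < \infty$. Combined with $\sum_k\gamma_k = \infty$, this forces $\liminf_k\mathbb{E}[\|F(x_k)\|^2] = 0$; the passage to $\mathbb{E}[F(x_k)]\to 0$ for the whole sequence is then obtained through the Robbins--Siegmund supermartingale argument underlying the unified analysis of \cite{li2022unified}, which is precisely designed to convert such summability estimates into convergence of the gradient mapping, and whose hypotheses match the precise stepsize conditions stated in the Appendix.

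The hard part will be the two points where the nonconvex, stochastic, and composite features interact: justifying the interchange $\nabla\mathbb{E}_\xi[d_{C_\xi}^2] = \mathbb{E}_\xi[\nabla d_{C_\xi}^2]$ (and hence that $g_k$ is genuinely an unbiased estimate of $\nabla g$ with the claimed Lipschitz modulus), and controlling the mismatch between the inexact update direction $g_k$ and the exact gradient mapping $F$ so that the descent inequality retains a strictly negative $-\gamma_k\|F(x_k)\|^2$ term. Because $g$ is only a distance-squared to a prox-regular and generally nonconvex set, one cannot hope for convergence to a global minimizer; the residual $F$ is the correct stationarity measure, and verifying that the present setting fits the hypotheses of \cite{li2022unified}, namely unbiasedness, bounded variance, and the decreasing non-summable (in fact square-summable) stepsizes, is what makes that theorem directly applicable.
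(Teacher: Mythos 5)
Your proposal is correct and takes essentially the same route as the paper: you cast the iteration as a stochastic proximal gradient method on $\lambda f(x) + \tfrac12\mathbb{E}_\xi[d_{C_\xi}^2(x)]$, verify convexity of $f$, lower-boundedness, and Lipschitz continuity of the gradient via \autoref{prop:grad}, and conclude with the unified convergence theorem of \cite{li2022unified} — exactly the hypotheses the paper checks before invoking \cite[Corollary 3.6]{li2022unified}. The only difference is one of detail: the paper's proof is a three-line verification plus the citation, whereas you additionally unpack the descent-inequality, telescoping, and Robbins--Siegmund machinery that lives inside the cited theorem.
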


\section{Experimental results}

We now investigate the proposed algorithm for solving different instances of \eqref{eq:inv_pb}, namely Gaussian deblurring, motion deblurring, single image super-resolution and image inpainting, with several backbone restoration priors.

\subsection{Considered priors \& baselines}

To show the generality of the proposed FiRe framework, we instanciate it with various degradation-restoration models readily available from the \texttt{deepinv} library\footnote{\url{https://deepinv.github.io/}} \cite{tachella2023deepinv} and the \texttt{lama-project}\footnote{\url{https://advimman.github.io/lama-project/}}.
\begin{itemize}
\item DRUNet \cite{zhang2021plug}: we consider the standard DRUNet model trained for Gaussian denoising. In this case, $H = \operatorname{Id}$ and $w \sim \mathcal{N}(0, \sigma^2)$.
\item Restormer \cite{zamir2022restormer}: we consider 2 versions of the Restormer that were finetuned on Gaussian and Motion deblurring respectively. In this case, $H$ is the set of Gaussian (resp. motion) blurs, and $w \sim \mathcal{N}(0, \sigma^2)$.
\item SCUNet \cite{zhang2023practical}: we consider the SCUNet trained for general non-linear image restoration problems. We study $H = \operatorname{JPEG}_{q}$ with quality factors $q\in[20, 100]$, and $w \sim \mathcal{N}(0, \sigma^2)$.
\item SwinIR \cite{liang2021swinir}: we evaluate two variants trained for super-resolution with scale factors $\times2$ and $\times3$ respectively. In this case, $H$ consists of the corresponding downsampling operators, and $w=0$.
\item LAMA \cite{suvorov2022resolution}: we consider 2 versions of the LAMA model, the one provided by \cite{suvorov2022resolution} as well as one fine-tuned on random inpainting. In this case, $H$ is a large mask operator (resp. random missing pixels), and $w=0$.
\end{itemize}
\noindent Details regarding the finetunings can be found in Appendix.

\noindent We evaluate our method against several baselines. First, we compare with DRP \cite{hu2023restoration}, which is most similar to our approach but relies on a specific super-resolution prior (SwinIR). Second, we include a standard PnP baseline using the DRUNet denoising model. Finally, we compare against state-of-the-art methods: DPIR \cite{zhang2021plug}, representing classical PnP approaches, and DiffPIR \cite{zhu2023denoising}, providing a fair comparison with diffusion-based methods.

\begin{table}[t]
	\centering
	\small
    \resizebox{0.47\textwidth}{!}{
	\begin{tabular}{@{\hskip 0pt}c@{\hskip 8pt}l@{\hskip 6pt} l@{\hskip 6pt} c@{\hskip 6pt} c@{\hskip 6pt} c@{\hskip 0pt}} 
		\toprule 
        & Backbone & Restoration prior & \makecell{Gaussian\\Blur} & \makecell{Motion\\Blur} & SR$\times$4 \\ 
		\midrule
        & \multirow{1}{*}{none} & - & 21.57 & 19.48 & 21.69 \\
  \hline
\multirow{2}{*}{\rotatebox[origin=c]{90}{PnP}} & \multirow{1}{*}{DRUNet} & denoising & 25.93 & 27.95 & 22.54 \\
        & \multirow{1}{*}{SCUNet} & blind denoising & 23.91 & 30.19 & 23.05 \\
        \hline
        \multirow{3}{*}{\rotatebox[origin=c]{90}{\textcolor{black}{ShaRP}}} & \multirow{1}{*}{\textcolor{black}{LAMA}} & \textcolor{black}{random inpainting} & \textcolor{black}{23.42} & \textcolor{black}{23.27} & \textcolor{black}{21.96} \\
& \multirow{1}{*}{\textcolor{black}{Restormer}} & \textcolor{black}{gaussian debluring} & \textcolor{black}{27.14} & \textcolor{black}{28.12} & \textcolor{black}{23.16} \\
& \multirow{1}{*}{\textcolor{black}{SwinIR}} & \textcolor{black}{SR$\times$2} & \textcolor{black}{26.01} & \textcolor{black}{26.66} & \textcolor{black}{23.24} \\
\hline
\multirow{8}{*}{\rotatebox[origin=c]{90}{FiRe}} & \multirow{1}{*}{LAMA} & random inpainting & 24.72 & 26.19 & 21.80 \\
       & & random brush inpainting & 24.37 & 24.35 & 21.83 \\
       & \multirow{1}{*}{Restormer} & gaussian deblurring & 27.01 & 29.83 & 24.29 \\
       & & motion deblurring & 25.92 & 29.17 & 21.17 \\
       & \multirow{1}{*}{SCUNet} & blind denoising & \underline{28.03} & \underline{30.84} & \underline{24.30} \\
       & & JPEG restoration & 27.19 & \textbf{31.23} & 24.01 \\
      & \multirow{1}{*}{SwinIR} & SR$\times$2 & \textbf{28.19} & 27.17 & 24.02 \\
      & & SR$\times$3 & 27.31 & 26.84 & \textbf{24.77} \\
		\bottomrule
	\end{tabular}}
    \caption{PSNR values on the Set3C dataset for different restoration problems. The first row (``none'') corresponds to the pseudo-inverse reconstruction baseline. The second row shows results from a standard PnP approach using DRUNet. Third row shows results with the ShaRP algorithm. Subsequent rows present results from \autoref{alg:sto_sgd} using different restoration models (LAMA, Restormer, SCUNet, SwinIR) and their associated training tasks. Best results per column in \textbf{bold}, second best \underline{underlined}.}
	\label{tab:results_mini}
\vspace{-1em}
\end{table}

\subsection{Single restoration prior}

We first evaluate \autoref{alg:sto_sgd} in its simplest form using a single restoration prior ($N=1$). \autoref{tab:results_mini} reports PSNR values on the Set3C dataset for various restoration models and their associated training tasks. For comparison, we include two PnP baselines: the DRUNet denoising model and the blind SCUNet model. The PnP framework is recovered within Algorithm~\ref{alg:sto_sgd} by setting $H_k=\operatorname{Id}$ and $w_k = 0$.
Most restoration priors within our FiRe framework achieve comparable or better performance than the DRUNet baseline, with the exception of LAMA inpainting priors (which still outperform the pseudo-inverse reconstruction).
Also note that the SCUNet blind denoiser performs significantly better within our FiRe framework than in the PnP setting.
This suggests that our fixed-point formulation successfully generalizes the denoising prior paradigm to a wide class of restoration models.
We also compare to ShaRP using the same restoration priors. To achieve good performance with ShaRP required increasing the iteration limit to 200 (vs. 30 for FiRe), significantly raising the computational cost.
\autoref{fig:sr4_bsd_scunet} shows the restoration result for the SCUNet JPEG prior in a PnP vs FiRe framework. Note how the latter reconstruction has less artifacts. 

Associated visual for the SR$\times4$ problem are shown in \autoref{fig:sr4_vanilla}. We notice that except the LAMA inpainting priors, all priors perform on par with DPIR, and while all priors seem to yield similar reconstructions, few stand out. The LAMA prior seems to introduce glitch-like artifacts in the reconstruction, while the motion deblurring prior seems to introduce motion-like artifacts. All other priors seem to slightly over-smooth the reconstruction. The fact that the LAMA inpainting prior enables meaningful restoration is surprising, since it was only trained on binary mask inpainting. In particular, setting all mask values to one would reduce the problem to simple least squares minimization (\emph{i.e.}, pseudo-inverse restoration), whereas introducing sparse masks enables the model to leverage its learned prior. The model's inability to remove noise, which we attribute to its noiseless training regime, further demonstrates how the training setup influences the prior's behavior. This observation supports our broader claim that restoration models can impose meaningful priors even when trained on tasks significantly different from the target problem.

We provide in~\cref{fig:ablation_study} results showing the influence of the degradation strength on the reconstruction quality. We observe that the strength of the degradation plays a role similar to that of the regularization parameter in traditional variational methods.

\begin{figure}[t]
\small
    \centering
    \setlength{\tabcolsep}{1pt}
    \begin{tabular}{@{}c@{\hskip 4pt}c@{\hskip 4pt}c@{}}
    $y$ & PnP & FiRe \\
        \includegraphics[width=0.15\textwidth]{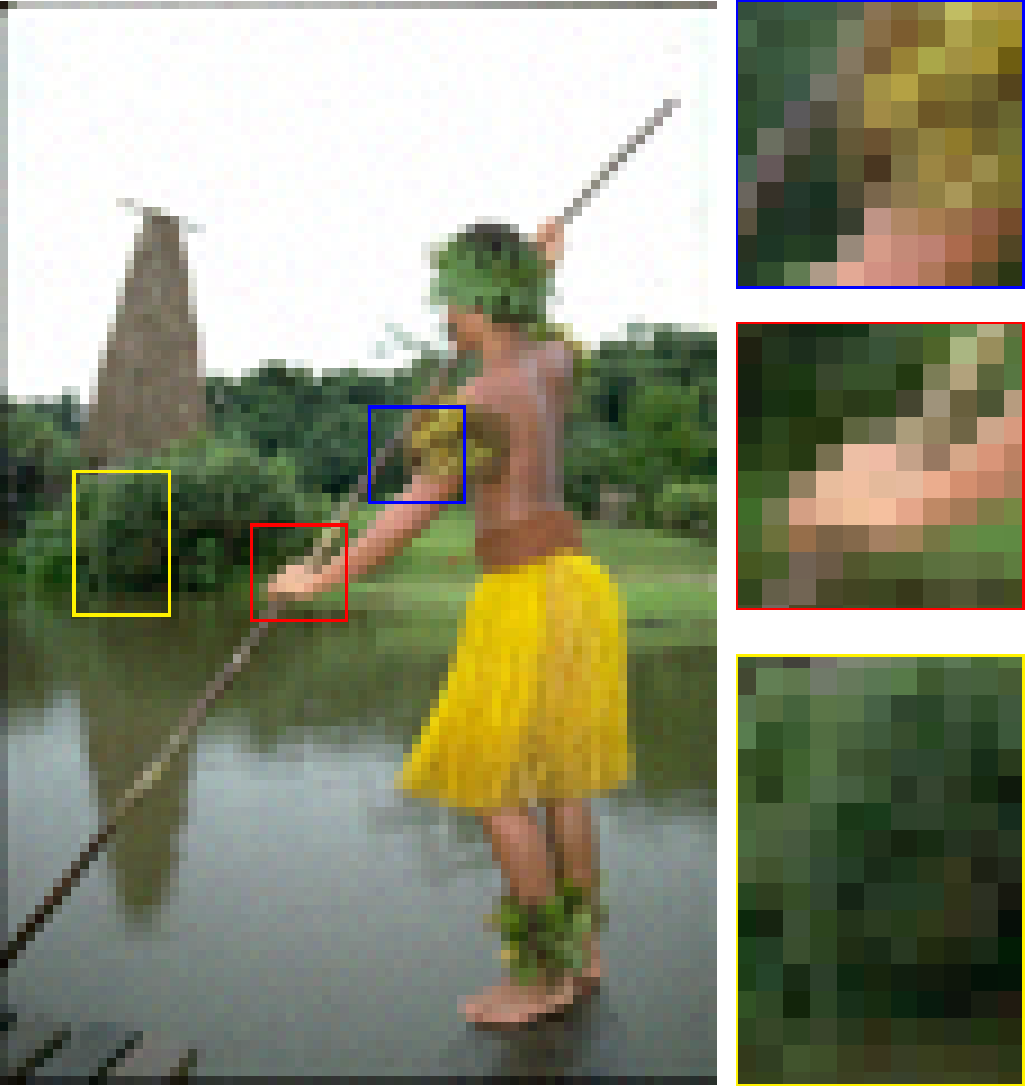} &
        \includegraphics[width=0.15\textwidth]{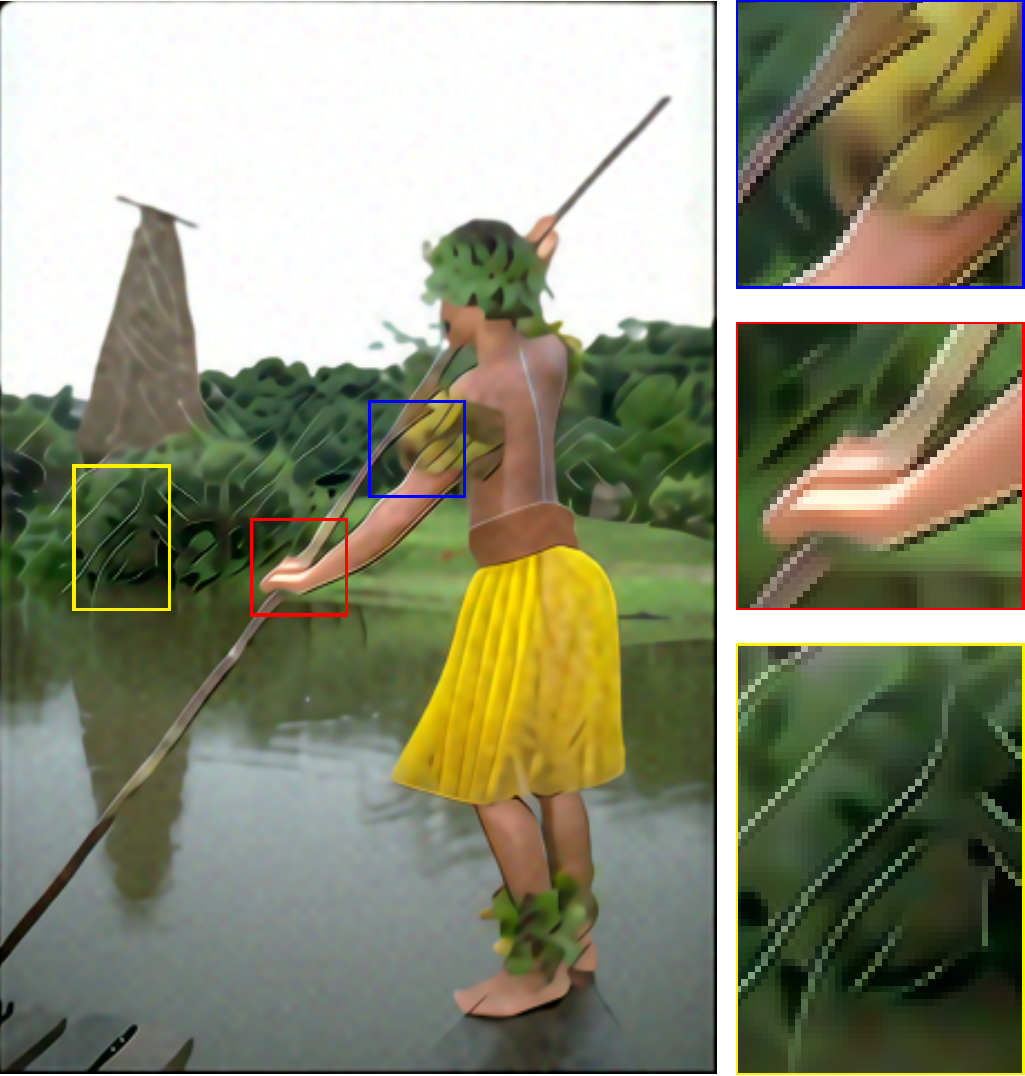} &
        \includegraphics[width=0.15\textwidth]{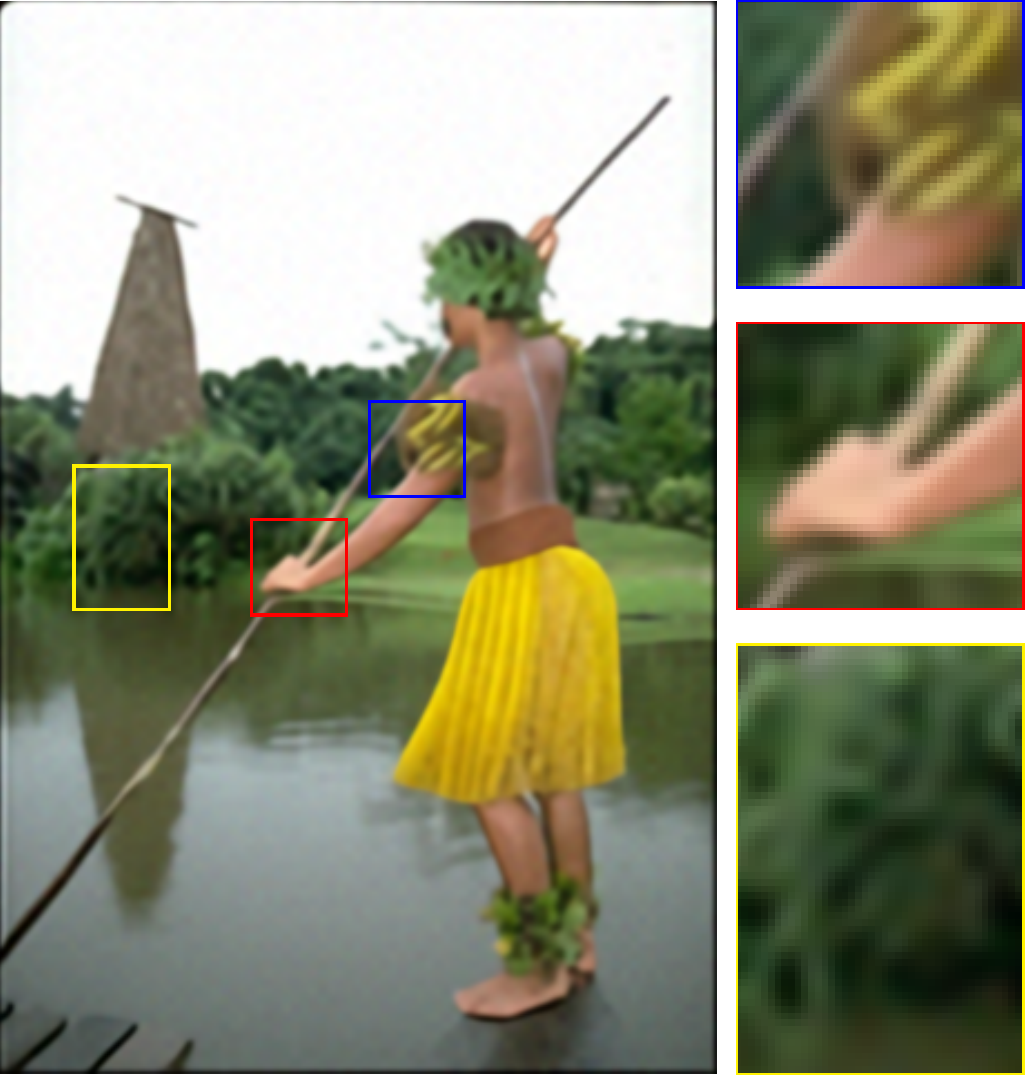} \\
        PSNR & 25.45 dB & 25.96 dB 
    \end{tabular}
    \vspace{-1em}
    \caption{Reconstruction results on the SR$\times$4 problem with the SCUNet JPEG prior.}
    \label{fig:sr4_bsd_scunet}
\vspace{-1em}
\end{figure}

\begin{figure}[t]
\small
    \centering
    \setlength{\tabcolsep}{1pt}
    \begin{tabular}{@{}cccc@{}}
        (a) $y$ & (b) $A^\dagger y$ & (c) DPIR \cite{zhang2021plug} & (d) LAMA \\
        \includegraphics[width=0.12\textwidth]{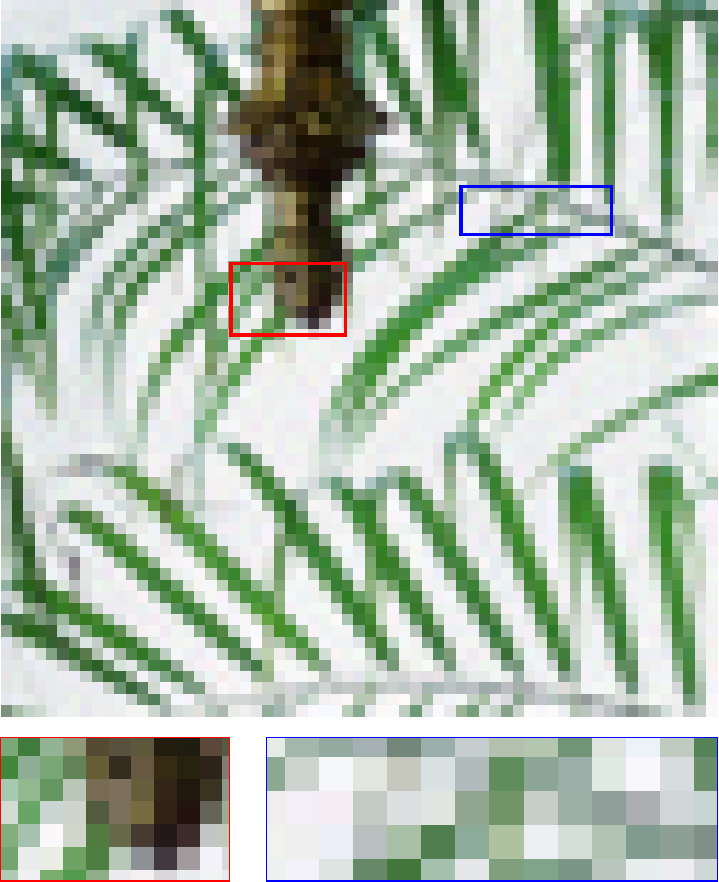} &
        \includegraphics[width=0.12\textwidth]{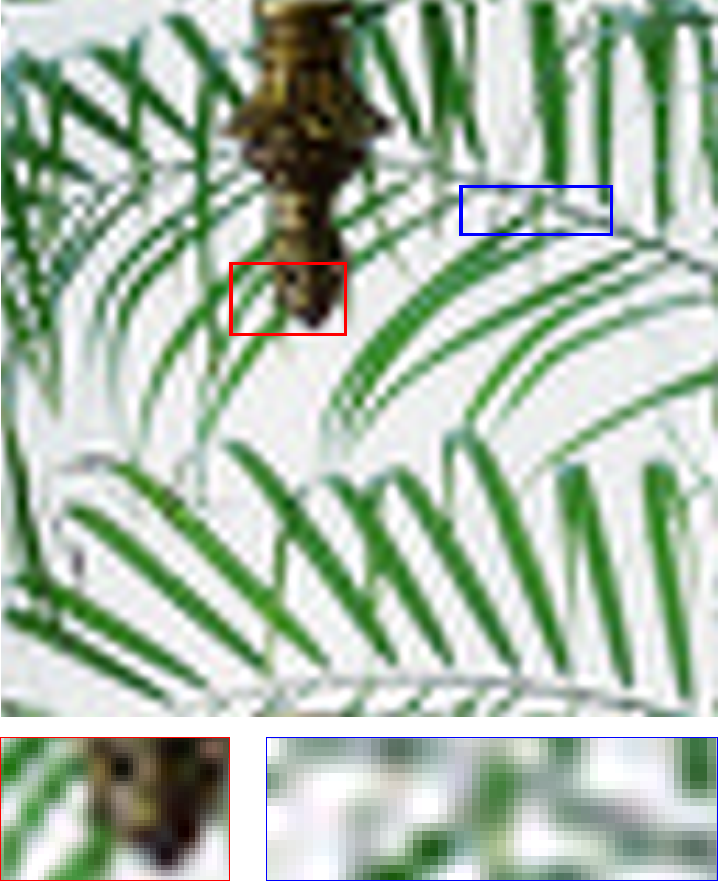} &
        \includegraphics[width=0.12\textwidth]{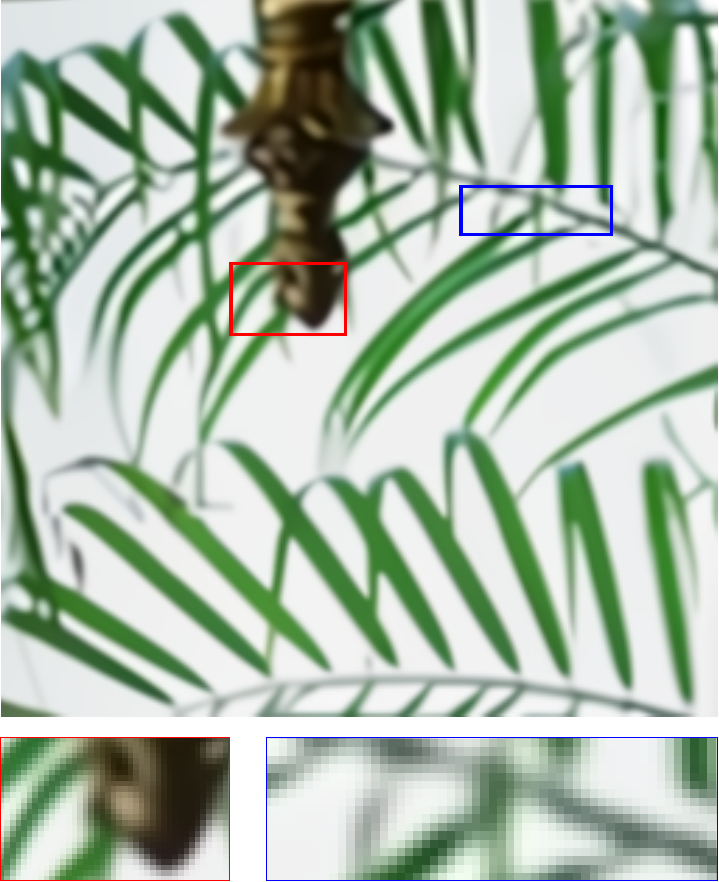} &
        \includegraphics[width=0.12\textwidth]{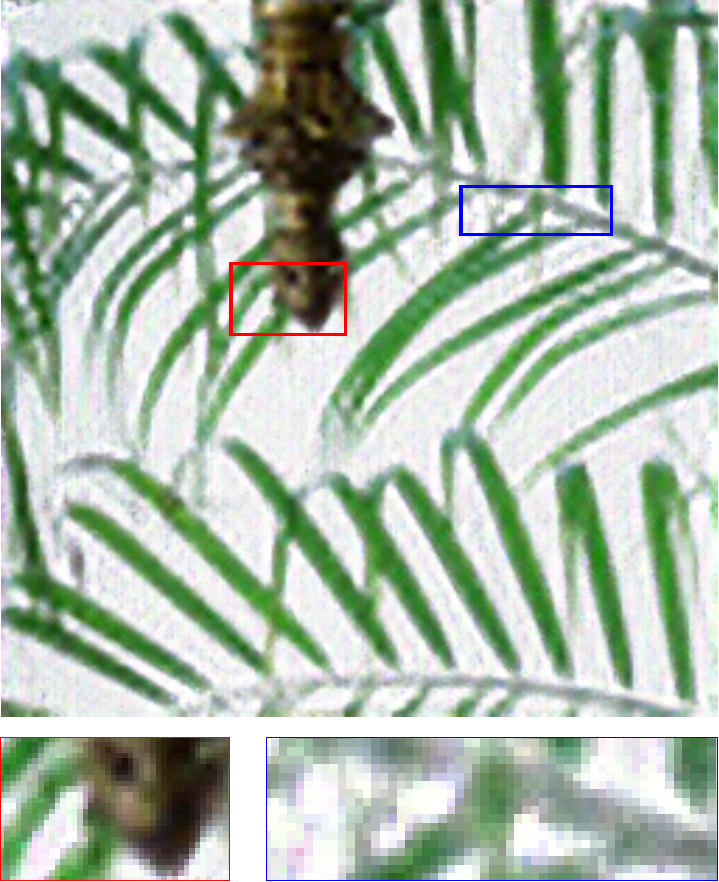} \\
         PSNR & 19.71 dB & 22.04 dB & 20.27 dB \\[6pt]
        (e) Rest. Motion & (f) SwinIR 2$\times$ & (g) SCUNet JPEG & (h) Rest. gauss. \\
        \includegraphics[width=0.12\textwidth]{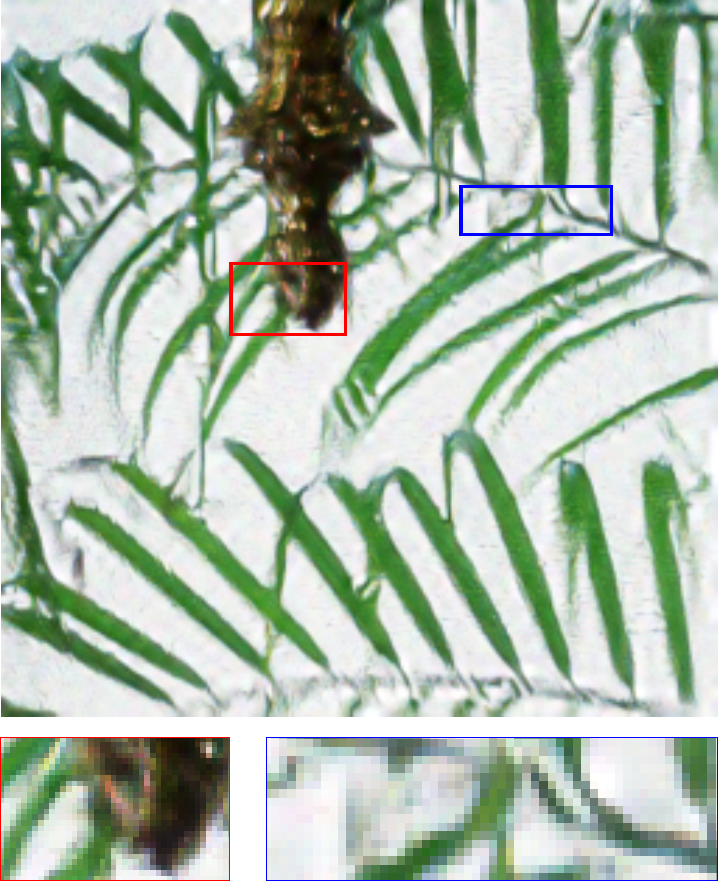} &
        \includegraphics[width=0.12\textwidth]{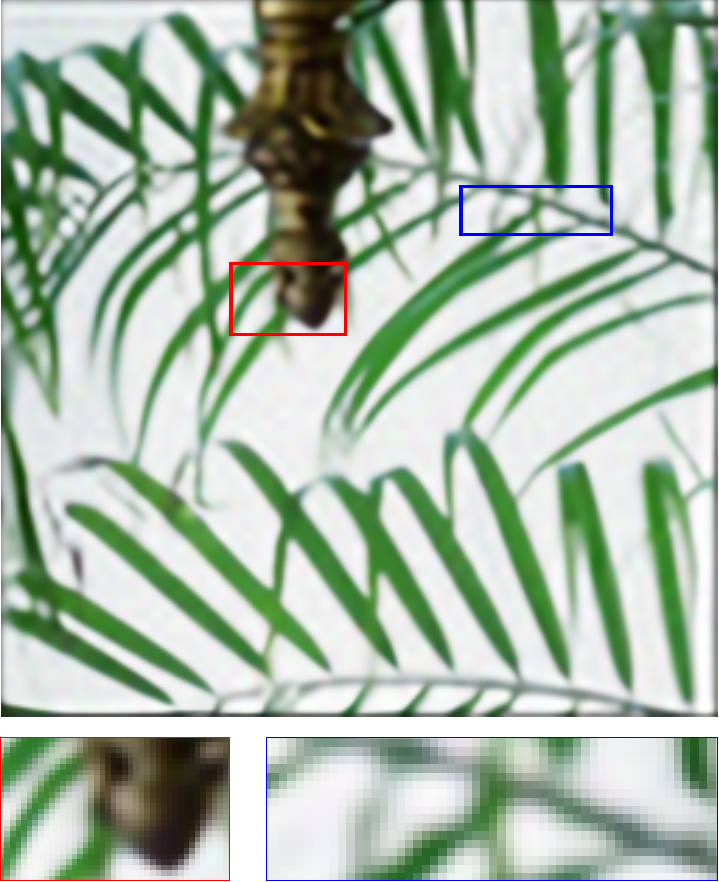} &
        \includegraphics[width=0.12\textwidth]{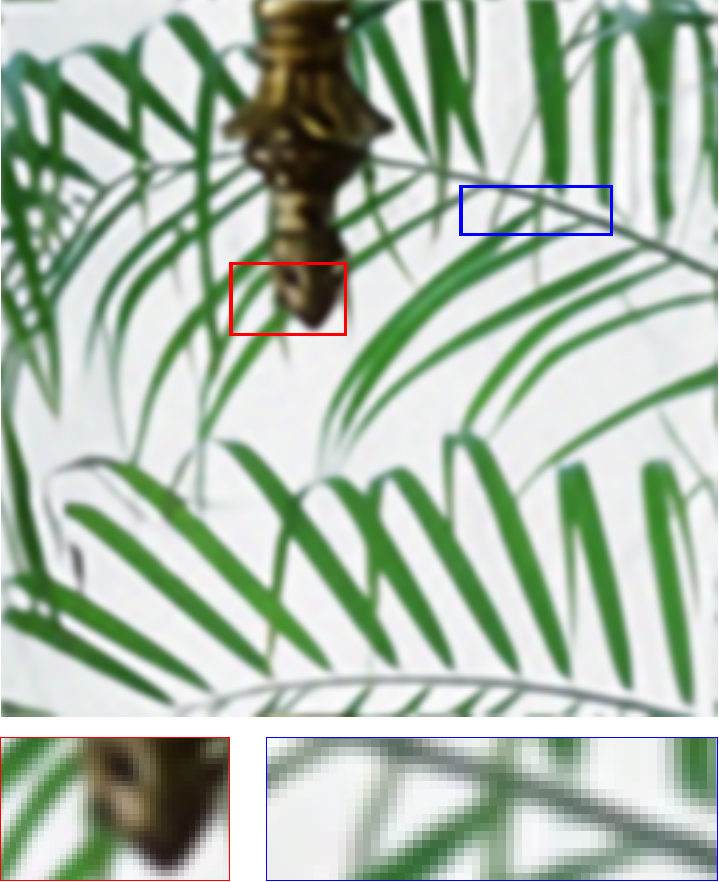} &
        \includegraphics[width=0.12\textwidth]{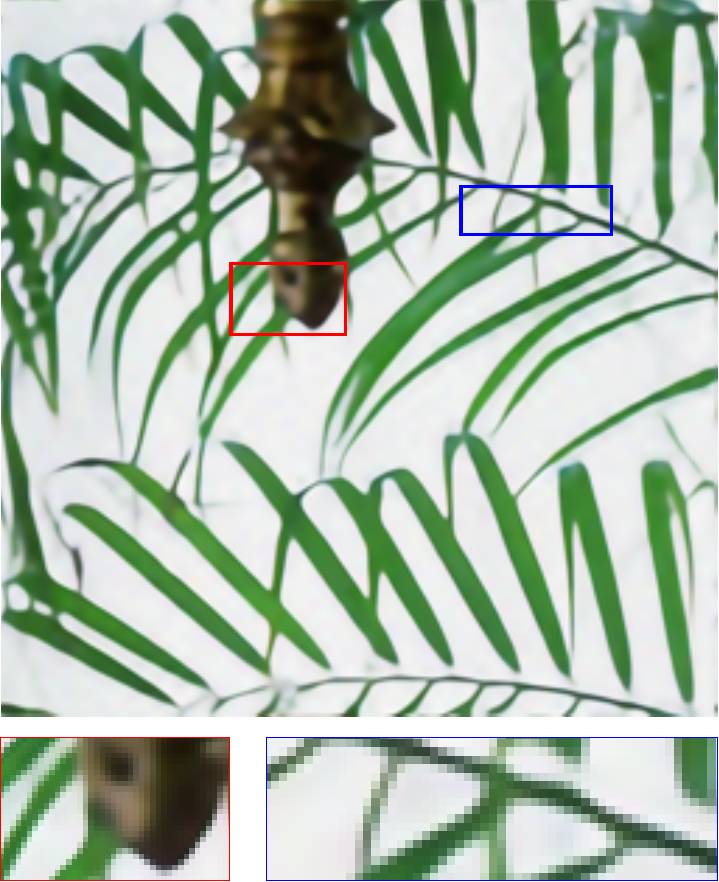} \\
        21.17 dB & 21.23 dB & 22.44 dB & 22.47 dB
    \end{tabular}
    \vspace{-1em}
    \caption{Reconstruction results with various algorithms for a SR$\times$4 problem. (d)-(h) show reconstructions obtained with \autoref{alg:sto_sgd} for various priors.}
    \label{fig:sr4_vanilla}
    \vspace{-1em}
\end{figure}

\begin{figure}[t]
\centering
\includegraphics[height=0.18\textwidth]{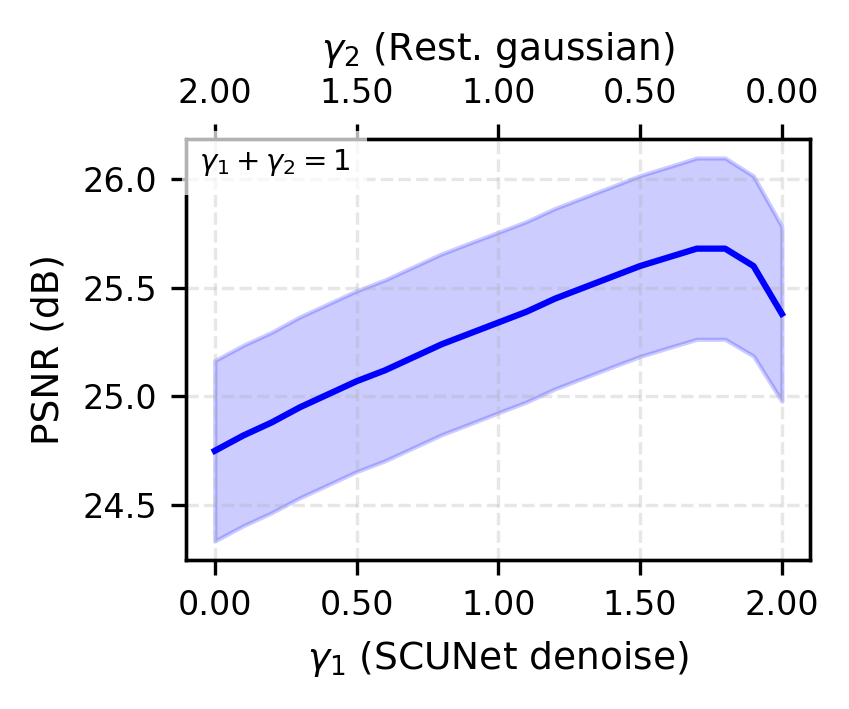}
\includegraphics[height=0.18\textwidth]{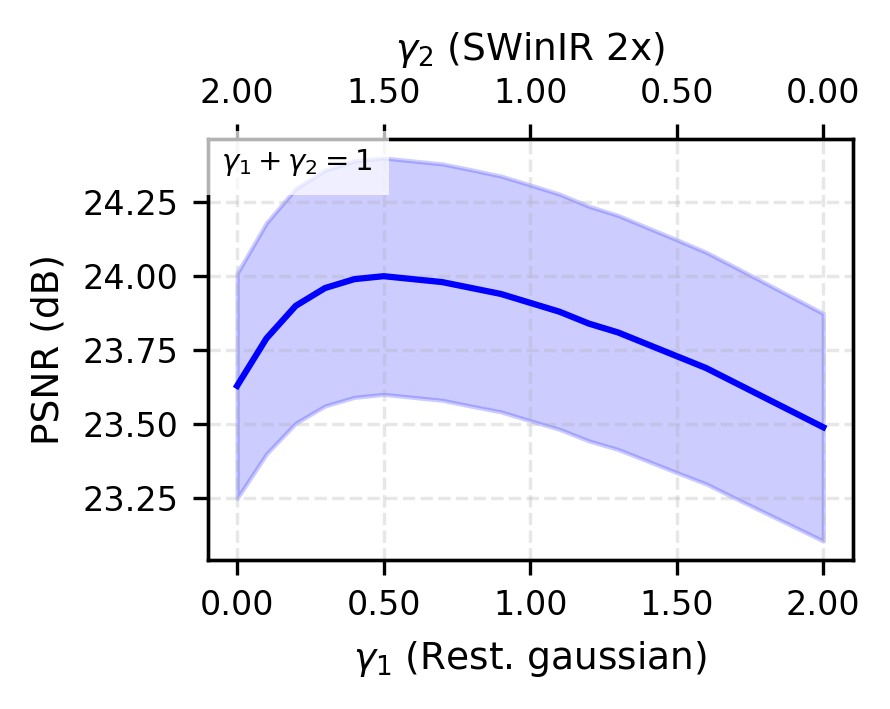}
\vspace{-1em}
\caption{$\gamma$ vs PSNR within the reconstruction quality for two different problems. Left: Gaussian deblurring, right: SR$\times$4. The $\gamma_1$ and $\gamma_2$ parameter control the strength of the associated prior.}
\label{fig:gamma_vs_psnr}
\vspace{-1em}
\end{figure}

\begin{figure*}[ht]
\small
\centering
\begin{tabular}{@{\hskip 0pt}c @{\hskip 3pt} c @{\hskip 3pt} c @{\hskip 3pt} c @{\hskip 3pt} c @{\hskip 3pt} c@{\hskip 0pt}}
\centering
Observed & DRP \cite{hu2023restoration} & DPIR \cite{zhang2021plug} & DiffPIR \cite{zhu2023denoising} & Proposed & Groundtruth \\
\includegraphics[width=0.16\textwidth]{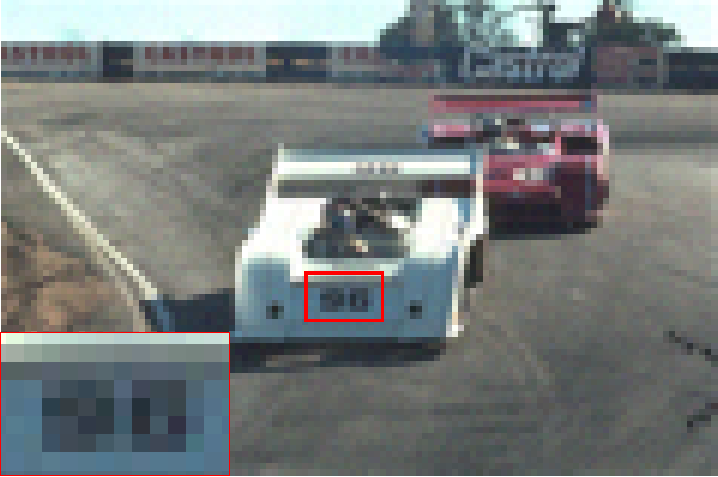} &
\includegraphics[width=0.16\textwidth]{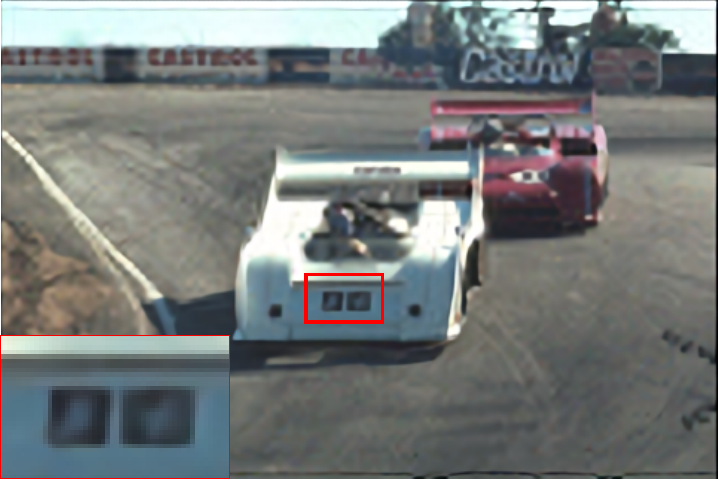} &
\includegraphics[width=0.16\textwidth]{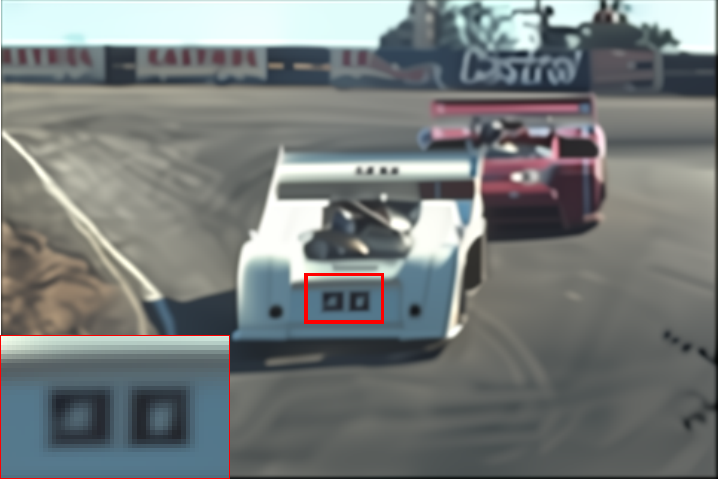} &
\includegraphics[width=0.16\textwidth]{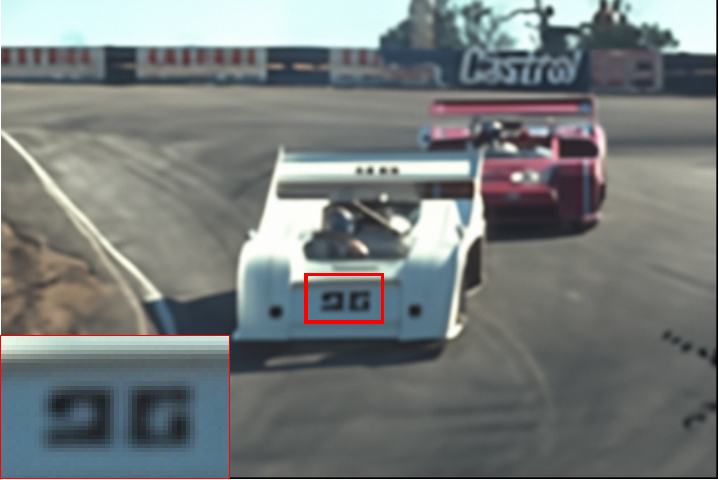} &
\includegraphics[width=0.16\textwidth]{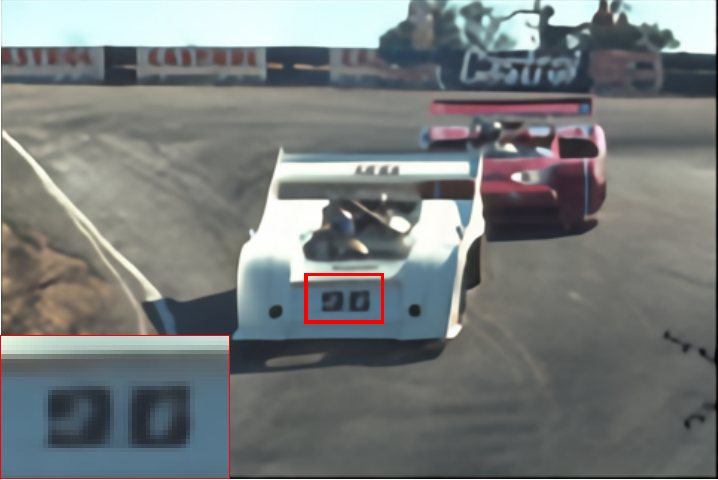} &
\includegraphics[width=0.16\textwidth]{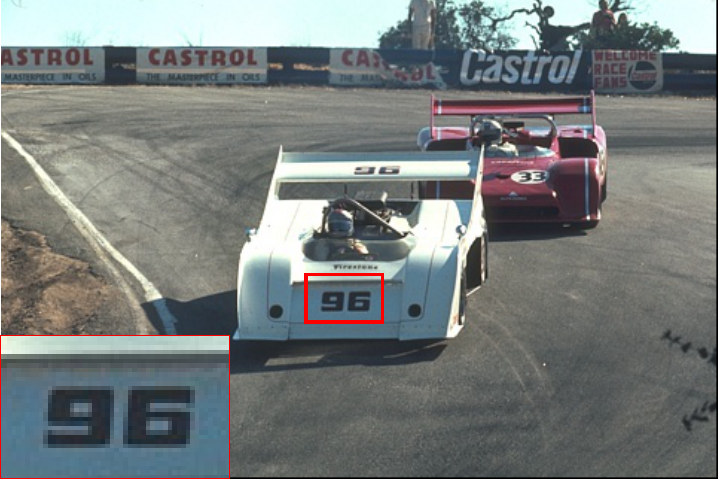}  \\
& (22.02, 0.52) & (24.02, 0.55) & (24.72, 0.49) & (25.11, 0.44) & (PSNR, LPIPS)\\
\includegraphics[width=0.16\textwidth]{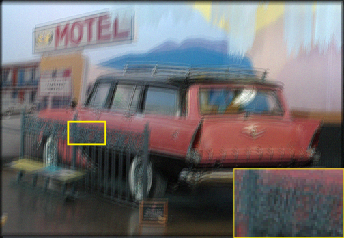} &
\includegraphics[width=0.16\textwidth]{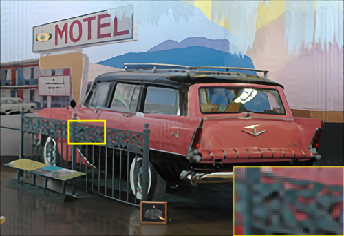} &
\includegraphics[width=0.16\textwidth]{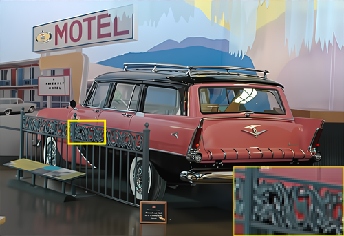} &
\includegraphics[width=0.16\textwidth]{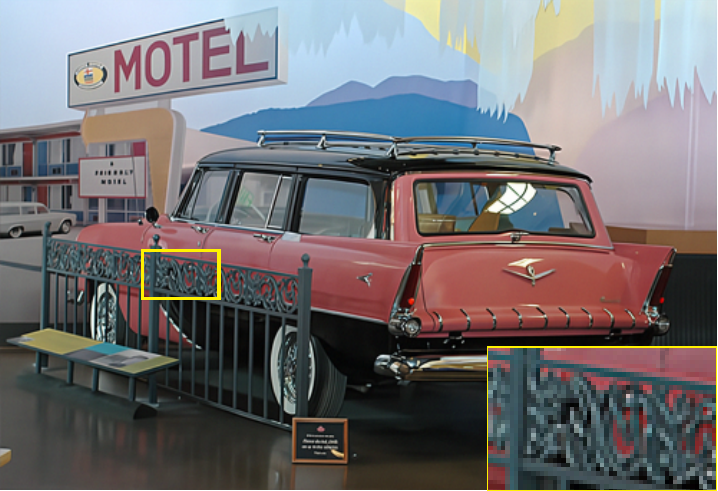} &
\includegraphics[width=0.16\textwidth]{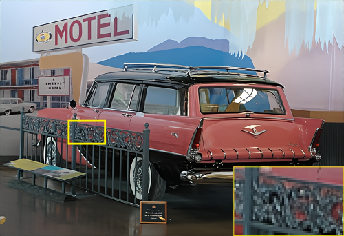} &
\includegraphics[width=0.16\textwidth]{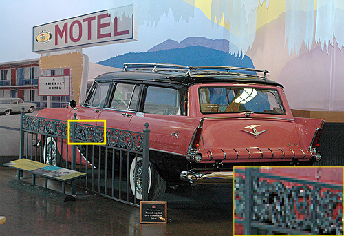} \\
 & (22.68, 0.27) & (29.19, 0.11) & (29.42, 0.09) & (28.64, 0.10) & (PSNR, LPIPS)\\
\includegraphics[width=0.16\textwidth]{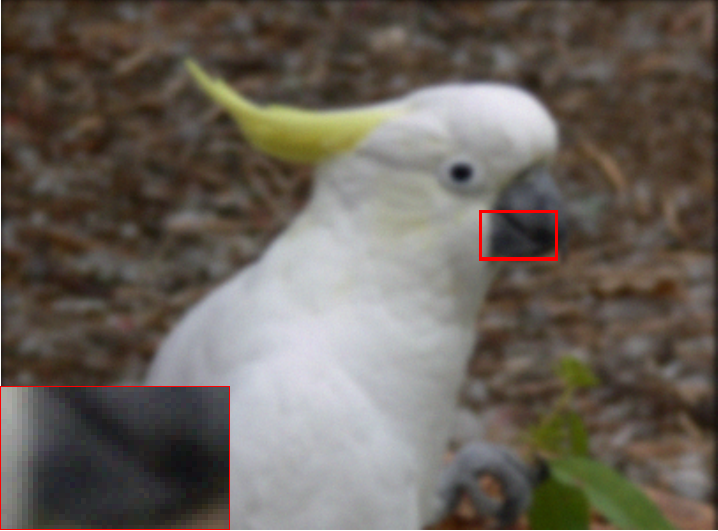} &
\includegraphics[width=0.16\textwidth]{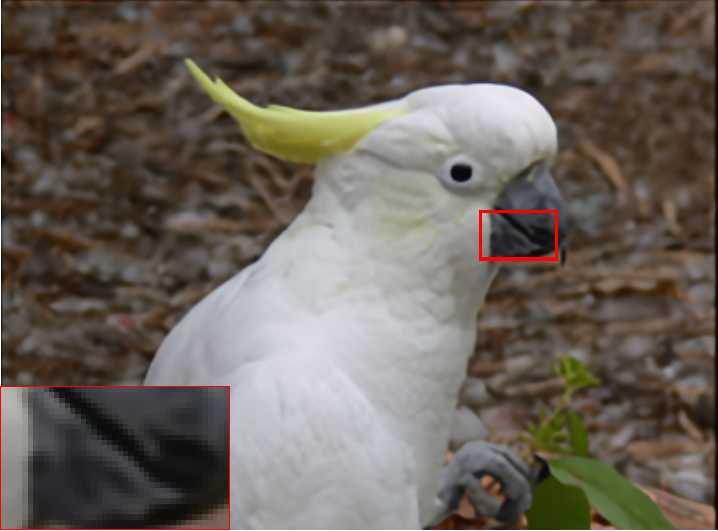} &
\includegraphics[width=0.16\textwidth]{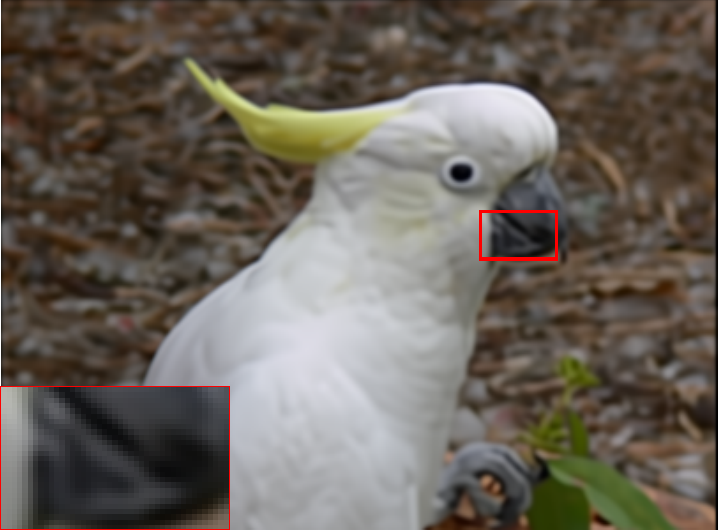} &
\includegraphics[width=0.16\textwidth]{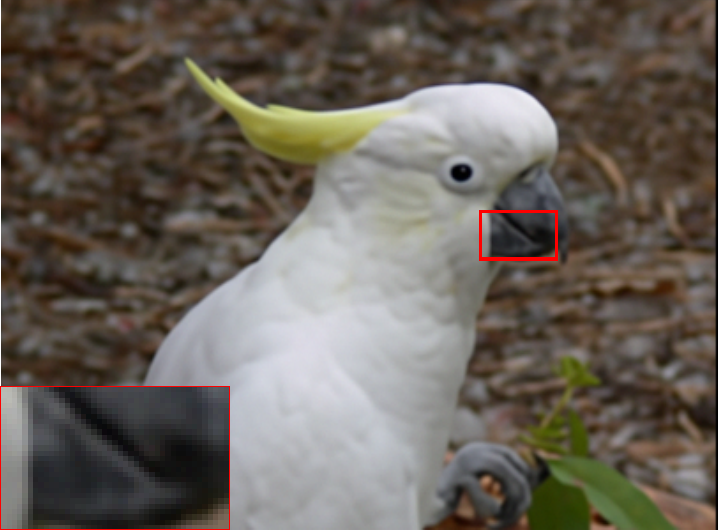} &
\includegraphics[width=0.16\textwidth]{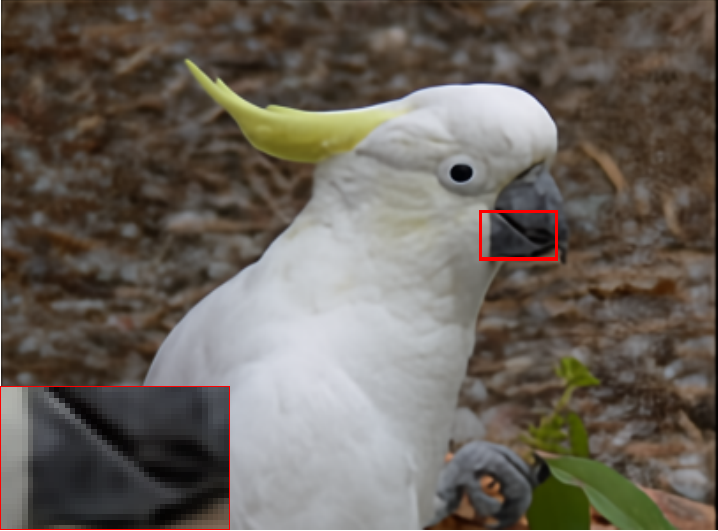} &
\includegraphics[width=0.16\textwidth]{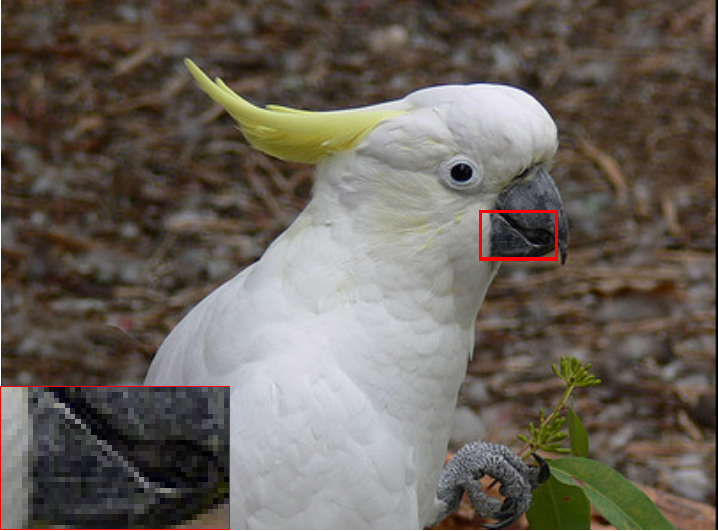} \\
& (30.22, 0.22) & (30.03, 0.32) & (30.42, 0.27) & (30.17, 0.31) & (PSNR, LPIPS)\\
\end{tabular}
\vspace{-1em}
\caption{Image restoration with various algorithms. Top: SR$\times 4$ problem with $\sigma = 0.01$ on BSD20. Middle: Motion blur on Imnet100. Bottom: Gaussian deblurring with blur kernel of size 3 and $\sigma = 0.01$ on Imnet100.}
\label{fig:gb_results}
\vspace{-1em}
\end{figure*}

\subsection{Combining multiple priors}

As discussed in \autoref{sect:proposed}, the set of fixed-points of $\operatorname{R}\circ D$ depends on both the restoration model $\operatorname{R}$ and its associated degradation operator $D$. These sets can vary significantly across different models, as illustrated in \autoref{fig:summary}~(c): the SCUNet JPEG prior yields fixed-points with piecewise-constant regions, while the deblurring model preserves more texture.
The formulation \eqref{eq:general_min_exp} naturally encompasses multiple priors, where each distance term $d_{C_\xi}$ is derived from a restoration model $\operatorname{R}_n$  associated to its training degradation tasks $\mathcal{D}^n$. The influence of each prior can be controlled through the parameter $\gamma_n$ in \autoref{alg:sto_sgd}: values close to 0 minimize the effect of restoration model $\operatorname{R}_n$, while larger values increase its contribution. This parameter plays a role analogous to regularization parameters in classical PnP algorithms \cite{hurault2024convergent}. For $N$ priors, we select $\gamma_n\in[0, 1]$ and ensure $\sum_{n=1}^N \gamma_n = 1$.

We demonstrate the benefits of combining multiple priors in \autoref{fig:gamma_vs_psnr}, where we study two model pairs: (Gaussian Restormer, SwinIR 2$\times$) and (Gaussian Restormer, SCUNet).
Fine-tuning the $\gamma_n$ parameters leads to consistent improvements in reconstruction quality, highlighting the advantage of our ensembling approach.
\autoref{tab:results_combined} provides quantitative results using an ensemble of three models: the blind SCUNet denoiser, SwinIR$\times$2, and Gaussian Restormer.
Our method outperforms the DRP algorithm, as further illustrated by the visual results in \autoref{fig:gb_results}.
Note that we return the iterate $u_k$ instead of $x_{k+1}$ in \autoref{alg:sto_sgd}, as we observed it yields superior reconstructions.

\subsection{Conditioning the prior on the measurements}

\begin{figure}[ht]
\small
    \centering
    \begin{tabular}{@{\hskip 0pt}c @{\hskip 3pt} c@{\hskip 3pt} c @{\hskip 0pt}}
        \centering
        (a) Observed & (b) Denoising prior & (c) Inpainting prior \\
        \includegraphics[width=0.14\textwidth]{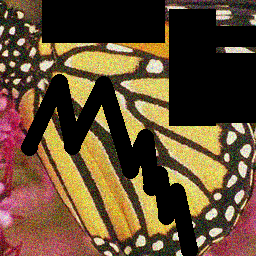} &
        \includegraphics[width=0.14\textwidth]{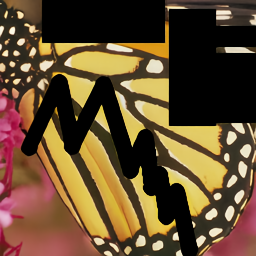} &
        \includegraphics[width=0.14\textwidth]{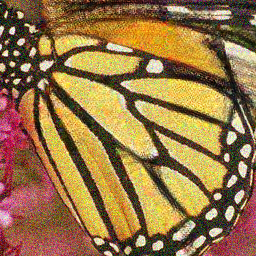} \\
        (d) DRP \cite{hu2023restoration} & (e) Denoise first & (f) Inpaint First \\
        \includegraphics[width=0.14\textwidth]{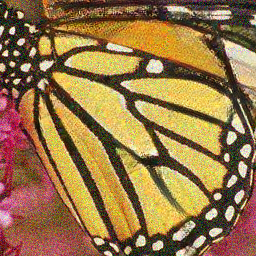} &
        \includegraphics[width=0.14\textwidth]{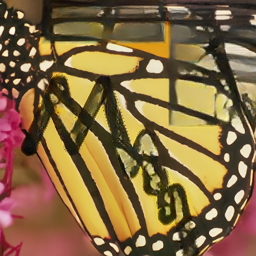} &
        \includegraphics[width=0.14\textwidth]{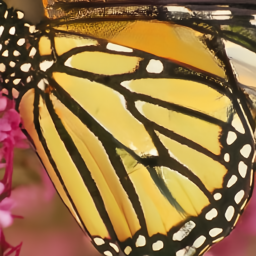} \\
        (g) ShaRP \cite{hu2024stochastic} & (h) DiffPIR & (i) {\bf Proposed} \\
        \includegraphics[width=0.14\textwidth]{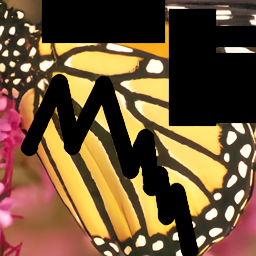} &
        \includegraphics[width=0.14\textwidth]{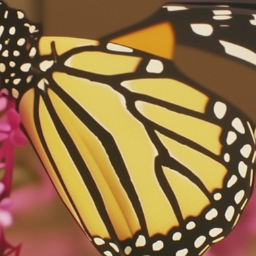} &
        \includegraphics[width=0.14\textwidth]{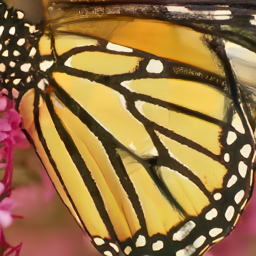} 
    \end{tabular}
    \vspace{-1em}
    \caption{Comparison of different restoration approaches for a noisy inpainting problem.}
    \label{fig:inpainting_comparison}
    \vspace{-2em}
\end{figure}

A natural extension of our framework involves incorporating the forward operator $A$ as one of the degradation operators $H$. Specifically, given a restoration model $\operatorname{R}_1$ trained to solve inverse problems of the form $y = \widetilde{A}x + w$ with operator $\widetilde{A}$ similar to $A$, we can condition one of the priors on the measurement process by setting $H_1 = A$ in \autoref{alg:sto_sgd}.

To showcase this strategy, we consider a noisy image inpainting problem, where $A$ is a binary masking operator. We implement \autoref{alg:sto_sgd} with two complementary priors: an inpainting-specific model (LAMA) as $\operatorname{R}_1$ with $H_1 = A$, and a general denoising model (SCUNet) as $\operatorname{R}_2$. Visual results in \autoref{fig:inpainting_comparison} show that while neither $\operatorname{R}_1$ nor $\operatorname{R}_2$ alone succeeds in our FiRe framework (panels (b) and (c)), their combination successfully addresses both the inpainting and denoising aspects of the problem (panel (i)).

We compare against several baselines. The DRP algorithm, despite being theoretically ill-defined due to the non-invertibility of $A$, can be implemented using pseudo-inverse computations. Using a LAMA inpainting model, DRP effectively handles missing pixels but fails to denoise the image. Conversely, ShaRP is effective at denoising but fails to properly inpaint masked pixels due to the $H_1^\top H_1$ term in its gradient. We also evaluate naive sequential approaches combining denoising and inpainting models. These reveal a strong dependency on the ordering of operations, a limitation that FiRe overcomes through parallel processing of different models.

\section{Conclusion}

We have introduced Fixed-points of Restoration (FiRe) priors, extending the implicit prior paradigm beyond denoising to general restoration models. Our approach leverages the observation that natural images are fixed points of restoration models composed with their training degradation operators. This yields both an explicit prior formulation and a versatile algorithm requiring no restrictive assumptions on the degradation prior. Experiments across various restoration tasks demonstrate two key capabilities: the combination of multiple restoration models in an ensemble-like fashion, and the incorporation of measurement-aware priors. These features enable FiRe to address challenging scenarios where traditional PnP methods struggle, such as noisy inpainting, all within a unified framework.
A challenge remains with the tuning of hyperparameters, which can significantly impact performance across different degradation settings. Addressing this limitation through adaptive or learned parameter selection would be a promising direction for future research.

\section*{Acknowledgements}
This work was supported by the BrAIN grant (ANR-20-CHIA-0016) and was granted access to the HPC resources of IDRIS under the allocation 2023-AD011014344 made by GENCI. Ulugbek Kamilov was supported by the NSF CAREER award under CCF-2043134.

{
    \small
    \bibliographystyle{ieeenat_fullname}
    \bibliography{main}
}

\clearpage
\setcounter{page}{1}
\maketitlesupplementary

\appendix

\section{Convergence of stochastic algorithm}

We now formalize the convergence result stated in \autoref{sect:expected_prior} for algorithm \eqref{eq:grad_sto}. Our analysis relies on two key assumptions. First, we require unbiased gradient estimators with variance growth condition:

\begin{assumption}
\label{ass:sto_1}
For all $k$, $g_k$ in \eqref{eq:grad_sto} is an unbiased estimator of $\frac{1}{2}\nabla d_C^2$. More precisely, assuming that the sequence $(x_k)_{k\in\mathbb{N}}$ is adapted to   the filtration $\{\mathcal{F}_k\}_{k\geq 0}$, we assume that
$\mathbb{E}_{A_k, e_k}[g_k|\mathcal{F}_k] = \frac{1}{2}\nabla d_C^2(x_k)$ where $d_C$ is defined as in \autoref{prop:grad}. Furthermore, we assume that there exists constants $A, B\geq 0$ such that
\begin{equation}
    \mathbb{E}[\|g_k-\frac{1}{2}\nabla d_C^2(x_k)\|^2|\mathcal{F}_k] \leq A(d_C^2(x_k))+B
\end{equation}
holds almost surely for all $k\in \mathbb{N}$.
\end{assumption}

\noindent Second, we require a standard assumption on the step sizes:

\begin{assumption}
\label{ass:sto_2}
The stepsize $(\gamma_k)_{k\in\mathbb{N}}$ satisfies $\sum_{k=0}^\infty \gamma_k=\infty$ and $\sum_{k=0}^\infty \gamma_k^2<\infty$.
\end{assumption}

\noindent Under those assumptions, the proposed algorithm rewrites as a proximal stochastic gradient algorithm, and we can derive the following result.

\begin{proposition}
Assume that Assumptions~\ref{ass:sto_1} and \ref{ass:sto_2} hold, 
and define the residual function 
\begin{equation}
F(x) = x-\operatorname{prox}_{\lambda f}(x-\frac{1}{2}d_C^2(x)). 
\end{equation}
Then we have that $\mathbb{E}[F(x_k)]\underset{k\to\infty}{\longrightarrow} 0$.
\end{proposition}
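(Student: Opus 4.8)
The plan is to recognize that under Assumptions~\ref{ass:sto_1} and \ref{ass:sto_2}, \autoref{alg:sto_sgd} (in the single-model case $N=1$) is exactly a proximal stochastic gradient descent iteration of the form $x_{k+1} = \operatorname{prox}_{\lambda f}(x_k - \gamma_k g_k)$, where $g_k$ is the stochastic gradient estimate from \eqref{eq:grad_sto}. The objective being minimized is $\Phi(x) = \lambda f(x) + \frac{1}{2}d_C^2(x)$, which decomposes into a convex nonsmooth part $\lambda f$ (handled by the prox) and a smooth part $\frac{1}{2}d_C^2$ whose gradient is $1$-Lipschitz by \autoref{prop:grad}. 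First I would verify that these structural hypotheses match the setting of \cite{li2022unified}: specifically that (i) $g_k$ is conditionally unbiased for $\frac{1}{2}\nabla d_C^2(x_k)$, (ii) the conditional variance obeys the affine bound $A\,d_C^2(x_k)+B$ from \autoref{ass:sto_1} (an "expected smoothness"/ABC-type condition), and (iii) the stepsizes are square-summable but not summable as in \autoref{ass:sto_2}.

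The key intermediate step is a descent-type inequality for the smooth part. Using the $1$-Lipschitzness of $\frac{1}{2}\nabla d_C^2$ together with the nonexpansiveness of $\operatorname{prox}_{\lambda f}$ (since $f$ is convex), I would bound the expected decrease of a suitable merit function — either $\Phi(x_k)$ itself, or more naturally the squared residual $\|F(x_k)\|^2$ where $F$ is the fixed-point residual of the proximal-gradient map. The standard argument expands $\mathbb{E}[\Phi(x_{k+1})\mid\mathcal{F}_k]$, substitutes the unbiasedness to replace $g_k$ by $\frac12\nabla d_C^2(x_k)$ in the linear term, and absorbs the quadratic term using the variance bound; this yields a recursion of the form $\mathbb{E}[\Phi(x_{k+1})] \leq \mathbb{E}[\Phi(x_k)] - c\,\gamma_k\,\mathbb{E}[\|F(x_k)\|^2] + C\gamma_k^2(A\,\mathbb{E}[d_C^2(x_k)]+B)$ for appropriate constants. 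One subtlety is that $\Phi$ need not be convex (since $C$ is only prox-regular, not convex), so convergence is established in the stationarity sense $\mathbb{E}[\|F(x_k)\|]\to 0$ rather than in objective value — which is precisely the claimed conclusion.

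The final step telescopes this recursion. Summing over $k$ and using $\sum_k\gamma_k^2<\infty$ to control the accumulated variance terms, while $\sum_k\gamma_k=\infty$ forces the weighted sum $\sum_k\gamma_k\,\mathbb{E}[\|F(x_k)\|^2]$ to be finite; a Robbins–Siegmund supermartingale argument (or the unified framework of \cite{li2022unified} directly) then gives $\liminf_k \mathbb{E}[\|F(x_k)\|]=0$, and under the decreasing-stepsize regime one upgrades this to $\mathbb{E}[F(x_k)]\to 0$. I would invoke \cite{li2022unified} to package these telescoping and supermartingale estimates rather than reproving them. The main obstacle I anticipate is bridging the prox-regularity of $C$ with the smoothness hypotheses required by the cited convergence theorem: \autoref{prop:grad} only guarantees Lipschitz gradient \emph{locally} around points where $C$ is prox-regular, so either a global uniform prox-regularity assumption must be imposed (as hinted by "uniformly prox-regular" in the deterministic proposition) or the iterates must be shown to remain in a region where the smoothness bound holds. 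Making this global-versus-local gap rigorous — ensuring $\frac12\nabla d_C^2$ is globally $1$-Lipschitz along the trajectory so the variance and descent estimates apply for all $k$ — is the delicate point, and I would address it by strengthening \autoref{ass:proj} to uniform prox-regularity, which makes $d_C^2$ globally $\mathcal{C}^{1,1}$ and places us squarely in the hypotheses of \cite{li2022unified}.
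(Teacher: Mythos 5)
Your proposal takes essentially the same route as the paper: the paper's proof simply notes that $f$ is convex, that $d_C^2$ is bounded below, and that $\nabla d_C^2$ is Lipschitz by \autoref{prop:grad}, and then invokes \cite[Corollary 3.6]{li2022unified} for stochastic proximal gradient descent --- exactly the reduction you describe, with the descent-inequality and telescoping machinery you sketch left packaged inside that citation. Your closing concern about local versus global Lipschitzness of $\nabla d_C^2$ (prox-regularity in \autoref{ass:proj} is only local) is a legitimate subtlety that the paper's one-line proof glosses over, and your proposed fix via uniform prox-regularity mirrors how the paper's deterministic proposition handles the same issue.
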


\begin{proof}
First, we have that in our setting, $f$ is convex. Moreover, we have that $d_C^2$ is bounded from below, and from \autoref{prop:grad}, we have that $\nabla d_C^2$ is Lipschitz. The result then follows from \cite[Theorem Corollary 3.6]{li2022unified}.
\end{proof}

\section{Prior loss}

To investigate the underlying prior associated with the different models, we investigate the quantity $d(y) = \|y - R(Hy + w)\|$ where $y$ are images with various levels of degradation. More precisely, we set $y = \sigma_{\text{blur}}*x + \sigma_{\text{noise}}n$ where $n\sim\mathcal{N}(0, \operatorname{Id})$; thus, as $\sigma_\text{noise}\to0$ and $\sigma_\text{blur}\to0$, $y$ tends to a natural image.
We plot values of $d$ obtained for different models $\operatorname{R}$ in \autoref{fig:fig_prior}.
We observe that the lowest values of $d(y)$ are obtained for noiseless, smooth images, suggesting that the proposed priors tends to show a smoothing property and promotes image regularity.

\begin{figure}[t]
\small
    \centering
    \includegraphics[width=0.48\textwidth]{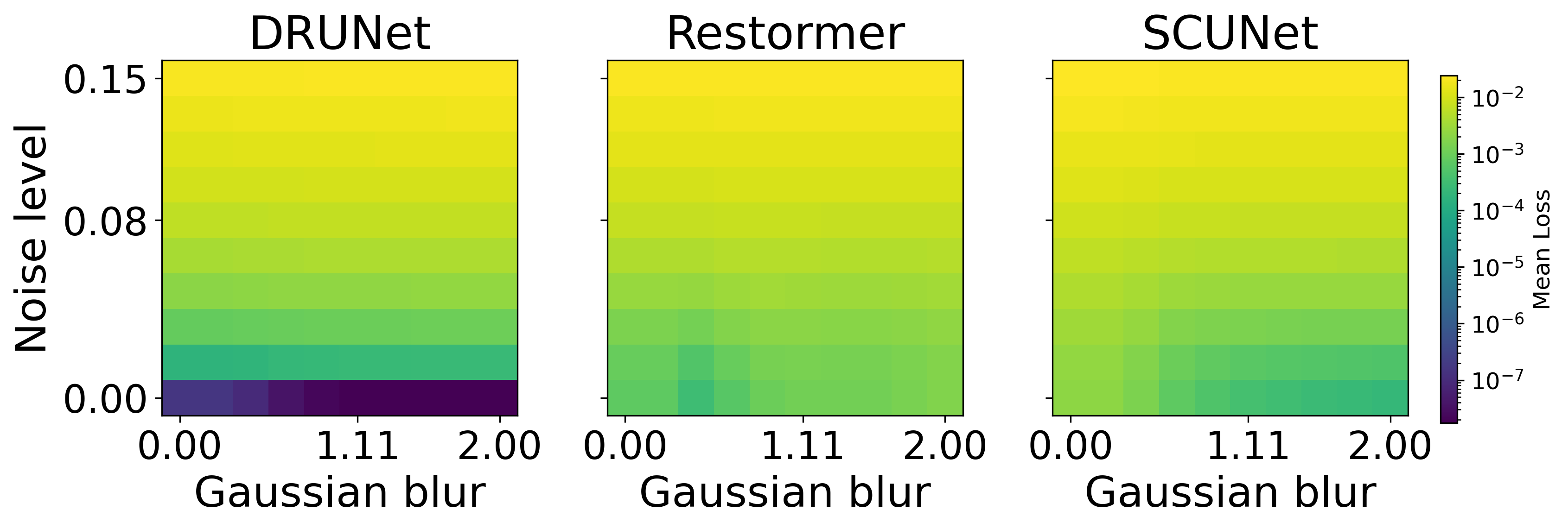}
    \vspace{-2.5em}
    \caption{Average distance $\|y - R(H(y))\|_2$ for different models and different degradations for $y$. More precisely, $y$ is a blurred and noisy version of $x$ defined as $y= \sigma_{\text{blur}} * x + \sigma_{\text{noise}} n$.}
    \label{fig:fig_prior}
\vspace{-1.5em}
\end{figure}

\section{Finetuning of restoration priors}

In this section, we summarize the implementation details for the finetuned models used in our experiments. We consider 3 finetuned models: two versions of the Restormer model for image deblurring (Gaussian and motion), and random mask inpainting. All finetunings are performed on the training dataset from \cite{zhang2021plug}. 

\vspace{-0.5em}
\subsection{Deblurring models}

The Restormer model from \cite{zamir2022restormer} is trained on real blur images with blur kernels difficult to simulate. In turn, computing the degradation set $\mathcal{D}$ associated to the model is not straightforward. Instead, we propose to finetune the model in two easily simulated setups.

\noindent \textbf{Restormer Gaussian:} Finetuned on Gaussian and diffraction blur removal with kernel size sampled uniformly at random $\sigma_\text{blur} \in [0.001, 4]$ and additive Gaussian noise with standard deviation sampled uniformly at random $\sigma \in [0.001, 0.1]$. We train on randomly cropped image patches of sizes 256$^2$ with batch size 8. Optimization is performed on L1 loss and uses Adam optimizer with default PyTorch parameters and learning rate 1e-4 for 90k steps. 

\noindent \textbf{Restormer Motion:} Finetuned on motion blur removal from \cite{tachella2023deepinv} with trajectory length scale 0.6, Gaussian Process standard deviation 1.0, and additive Gaussian noise sampled uniformly at random $\sigma \in [0.001, 0.1]$. Other training parameters are the same as above.

\subsection{Inpainting models}

The LAMA model \cite{suvorov2022resolution} is trained on large blur kernels and performs well on large inpainting masks, but we observed suboptimal performance on binary random masks.

\noindent \textbf{LAMA random inpainting:} Finetuned for random inpainting with mask probability sampled uniformly at random $p \in [0.1, 0.9]$ (no noise is added during training). Only the last 4 convolutional layers are updated during training. We train on randomly cropped image patches of sizes 128$^2$ with batch size 64. Optimization is performed on L1 loss uses Adam optimizer with learning rate 1e-4 for 300k steps.

\clearpage

\begin{figure*}[!htbp]
\small
\centering
\begin{tabular}{@{\hskip 0pt}c @{\hskip 3pt} c @{\hskip 3pt} c @{\hskip 3pt} c @{\hskip 3pt} c @{\hskip 3pt} c@{\hskip 0pt}}
\centering
Observed & DRP \cite{hu2023restoration} & DPIR \cite{zhang2021plug} & DiffPIR \cite{zhu2023denoising} & Proposed & Groundtruth \\
\includegraphics[width=0.16\textwidth]{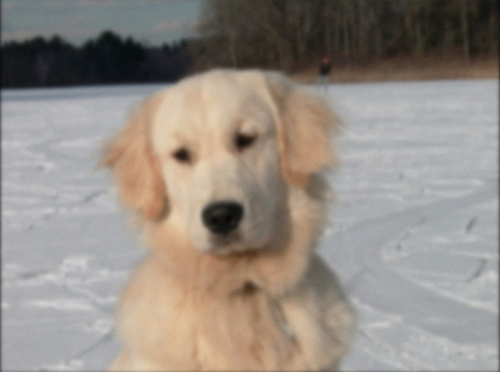} &
\includegraphics[width=0.16\textwidth]{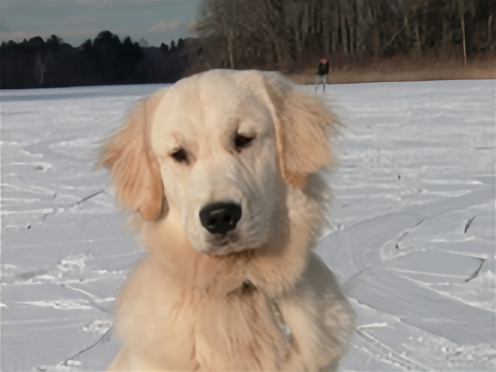} &
\includegraphics[width=0.16\textwidth]{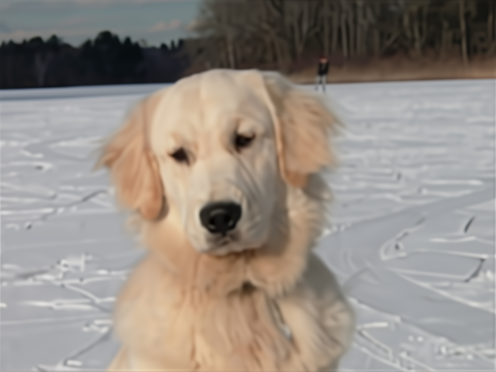} &
\includegraphics[width=0.16\textwidth]{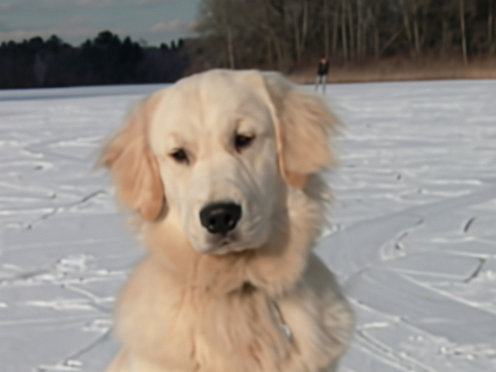} &
\includegraphics[width=0.16\textwidth]{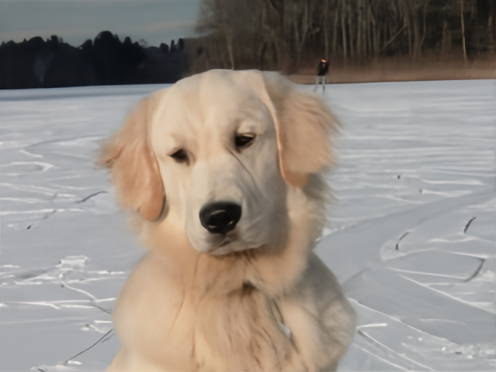} &
\includegraphics[width=0.16\textwidth]{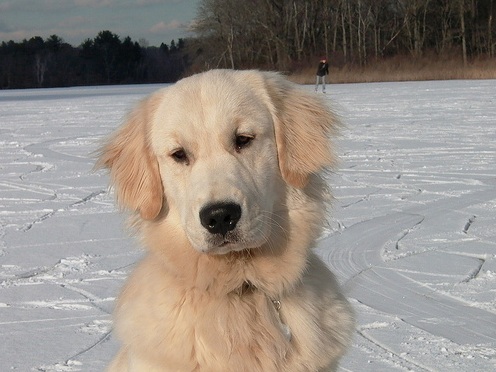}  \\
\includegraphics[width=0.16\textwidth]{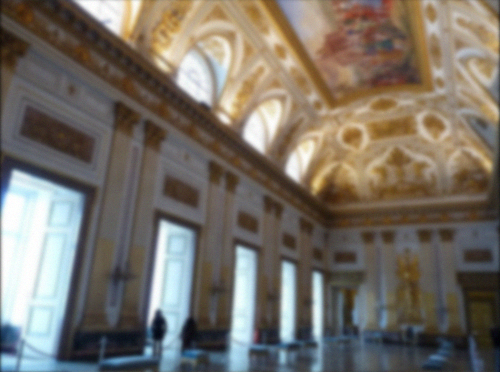} &
\includegraphics[width=0.16\textwidth]{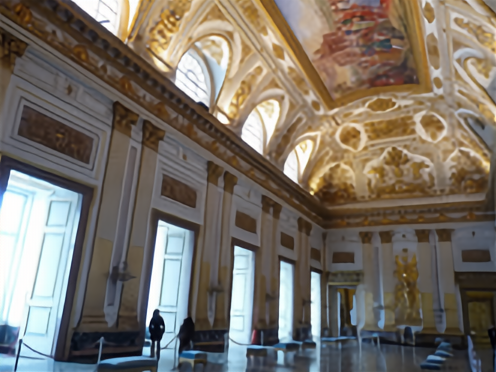} &
\includegraphics[width=0.16\textwidth]{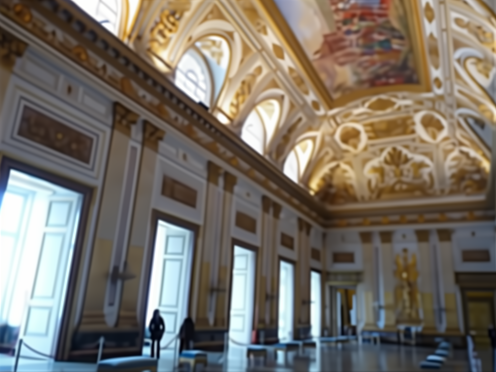} &
\includegraphics[width=0.16\textwidth]{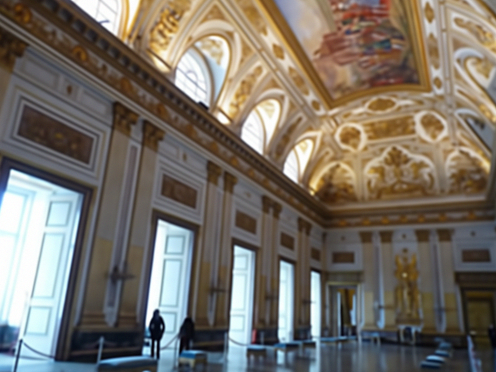} &
\includegraphics[width=0.16\textwidth]{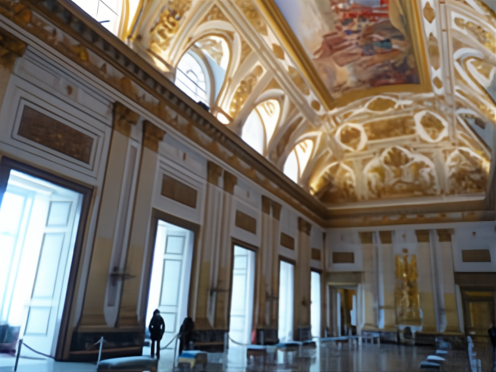} &
\includegraphics[width=0.16\textwidth]{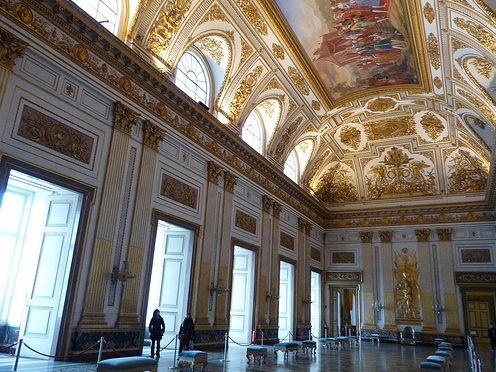}  \\
\includegraphics[width=0.16\textwidth]{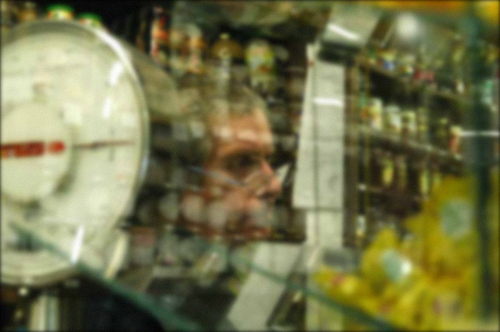} &
\includegraphics[width=0.16\textwidth]{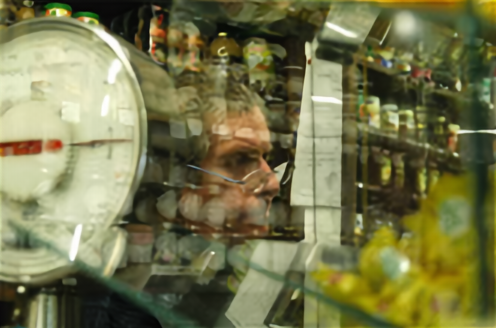} &
\includegraphics[width=0.16\textwidth]{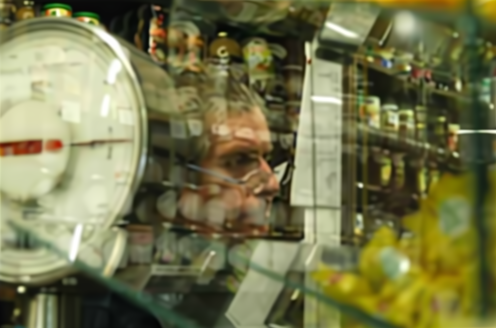} &
\includegraphics[width=0.16\textwidth]{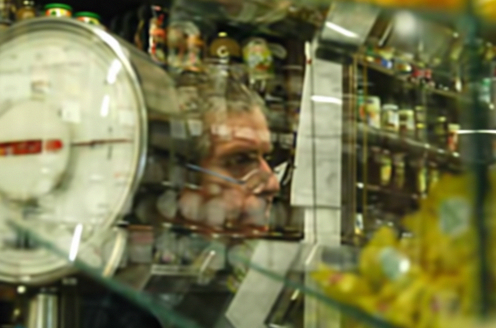} &
\includegraphics[width=0.16\textwidth]{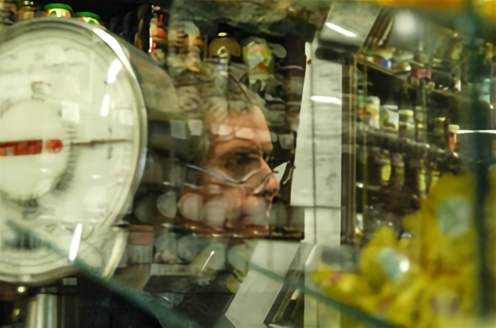} &
\includegraphics[width=0.16\textwidth]{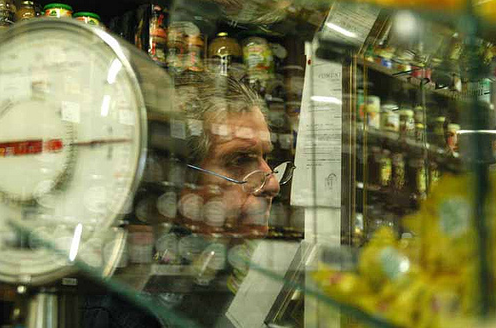}  \\
\includegraphics[width=0.16\textwidth]{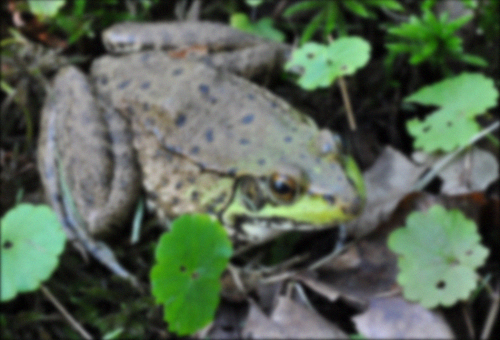} &
\includegraphics[width=0.16\textwidth]{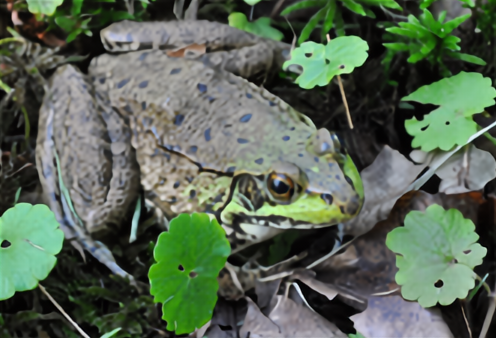} &
\includegraphics[width=0.16\textwidth]{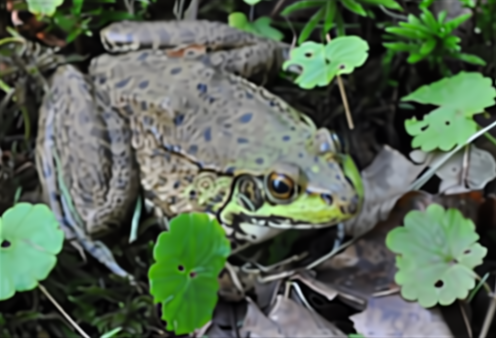} &
\includegraphics[width=0.16\textwidth]{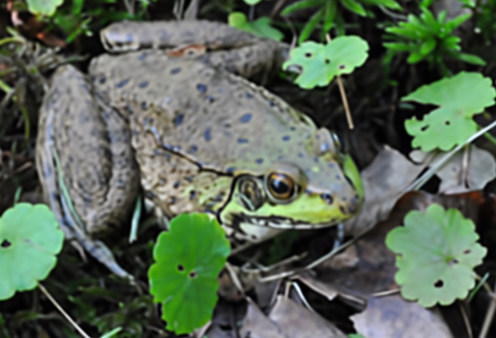} &
\includegraphics[width=0.16\textwidth]{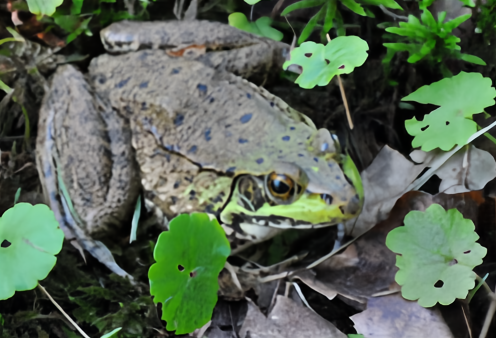} &
\includegraphics[width=0.16\textwidth]{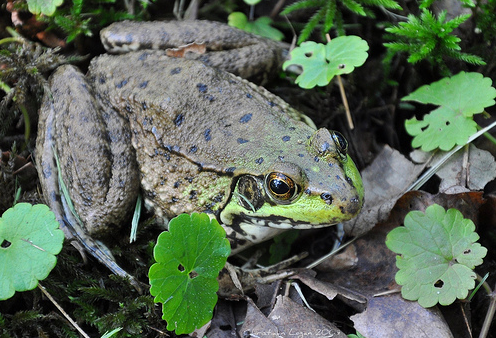}  \\
\includegraphics[width=0.16\textwidth]{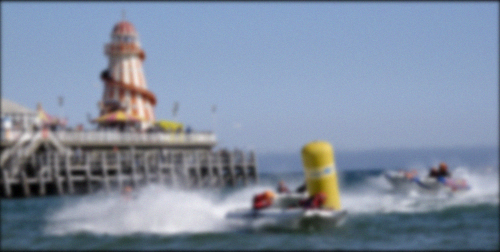} &
\includegraphics[width=0.16\textwidth]{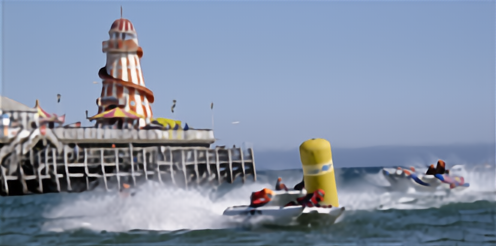} &
\includegraphics[width=0.16\textwidth]{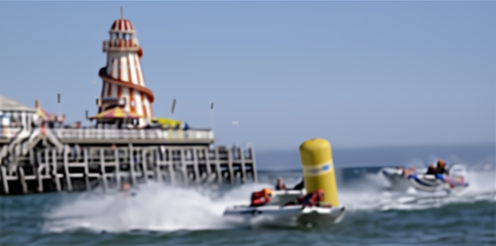} &
\includegraphics[width=0.16\textwidth]{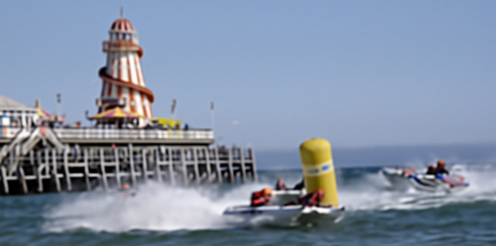} &
\includegraphics[width=0.16\textwidth]{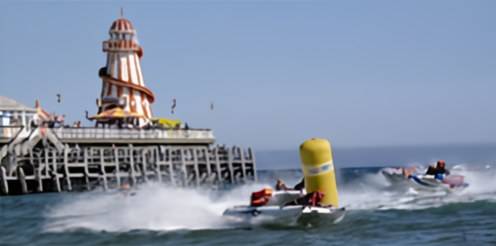} &
\includegraphics[width=0.16\textwidth]{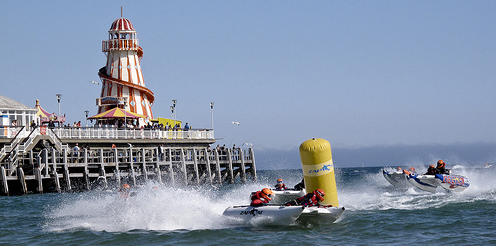}  \\
\end{tabular}
\caption{Gaussian deblurring on Imnet100, $\sigma = 0.01$.}
\label{fig:gaussian_sup}
\end{figure*}

\begin{figure*}[!htbp]
\small
\centering
\begin{tabular}{@{\hskip 0pt}c @{\hskip 3pt} c @{\hskip 3pt} c @{\hskip 3pt} c @{\hskip 3pt} c @{\hskip 3pt} c@{\hskip 0pt}}
\centering
Observed & DRP \cite{hu2023restoration} & DPIR \cite{zhang2021plug} & DiffPIR \cite{zhu2023denoising} & Proposed & Groundtruth \\
\includegraphics[width=0.16\textwidth]{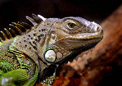} &
\includegraphics[width=0.16\textwidth]{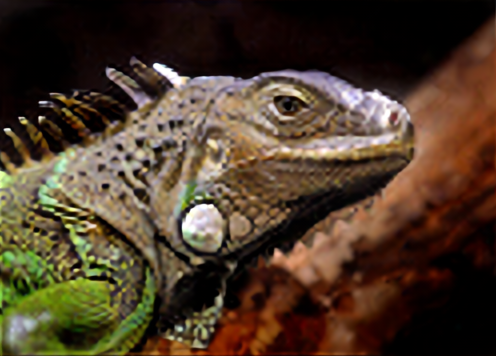} &
\includegraphics[width=0.16\textwidth]{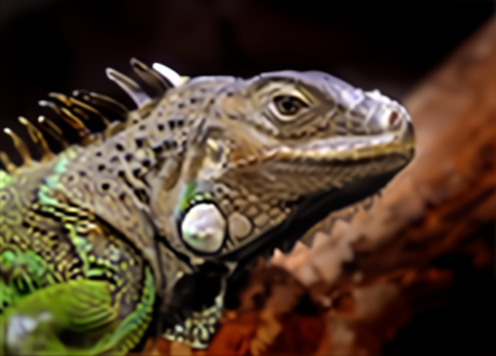} &
\includegraphics[width=0.16\textwidth]{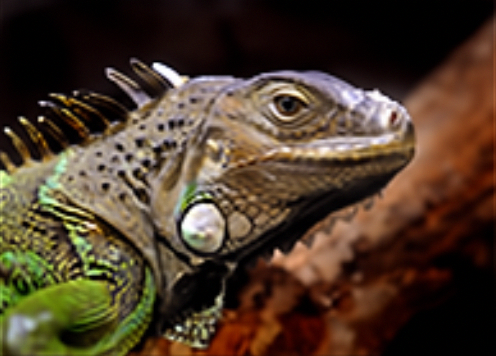} &
\includegraphics[width=0.16\textwidth]{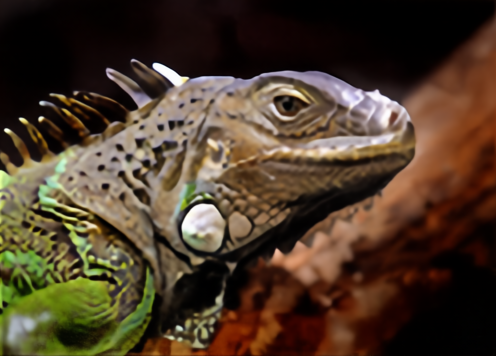} &
\includegraphics[width=0.16\textwidth]{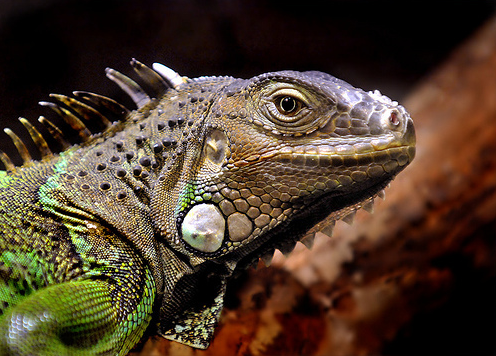}  \\
\includegraphics[width=0.16\textwidth]{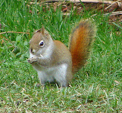} &
\includegraphics[width=0.16\textwidth]{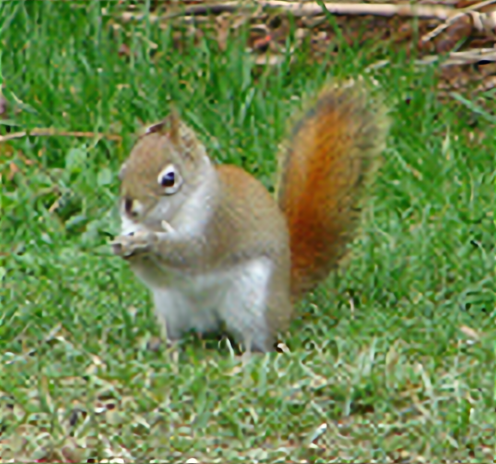} &
\includegraphics[width=0.16\textwidth]{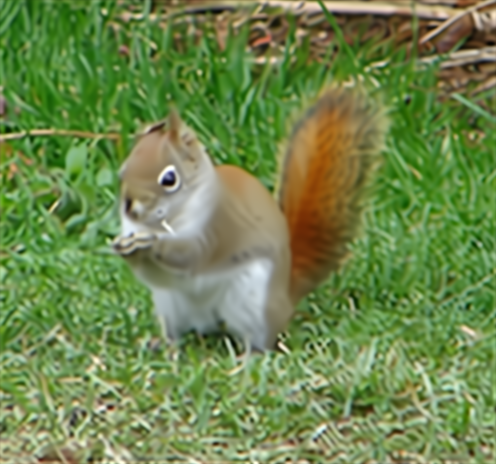} &
\includegraphics[width=0.16\textwidth]{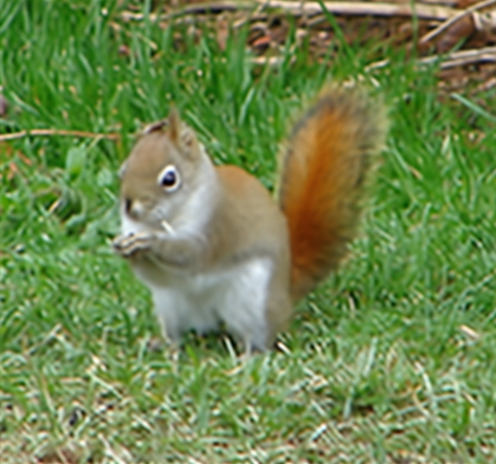} &
\includegraphics[width=0.16\textwidth]{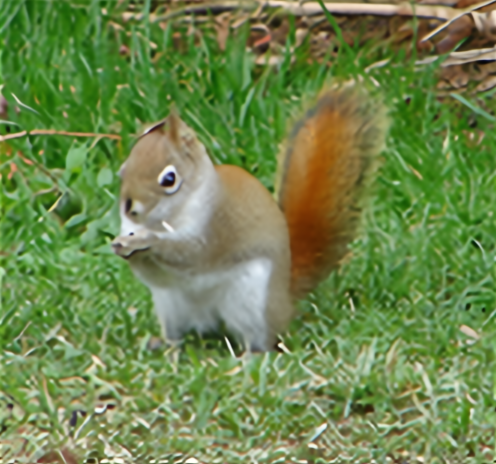} &
\includegraphics[width=0.16\textwidth]{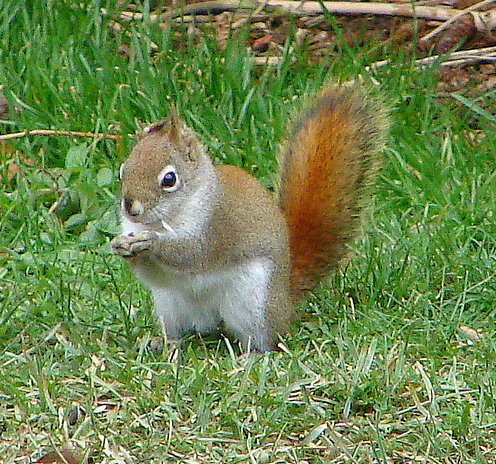}  \\
\includegraphics[width=0.16\textwidth]{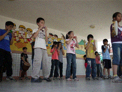} &
\includegraphics[width=0.16\textwidth]{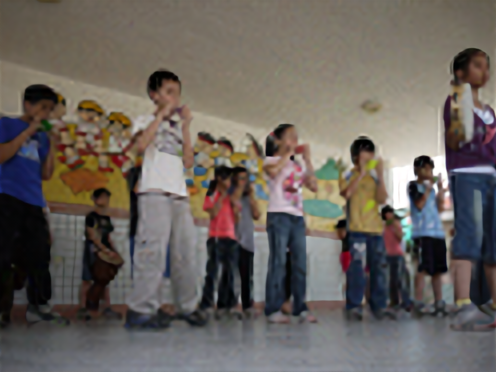} &
\includegraphics[width=0.16\textwidth]{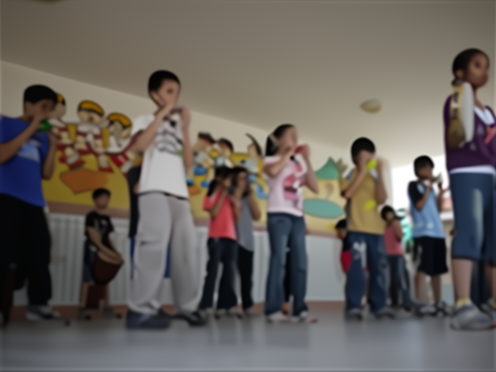} &
\includegraphics[width=0.16\textwidth]{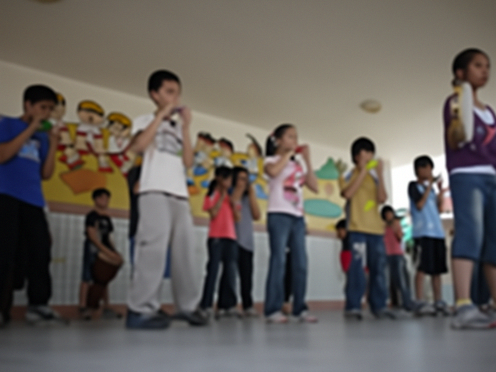} &
\includegraphics[width=0.16\textwidth]{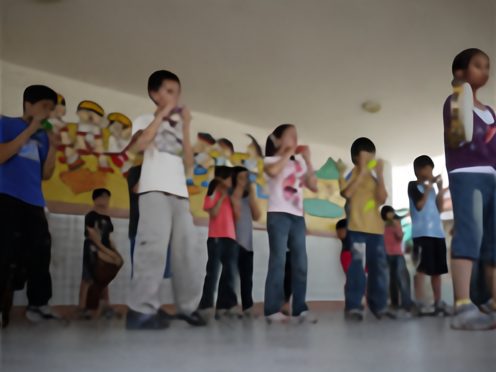} &
\includegraphics[width=0.16\textwidth]{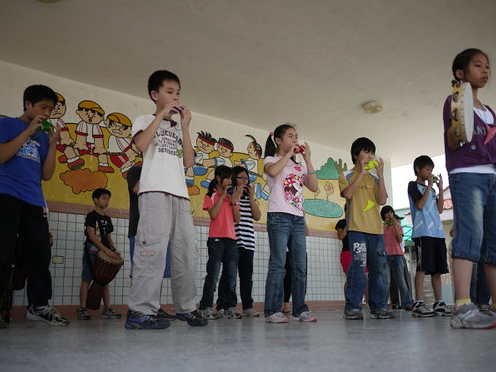}  \\
\includegraphics[width=0.16\textwidth]{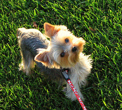} &
\includegraphics[width=0.16\textwidth]{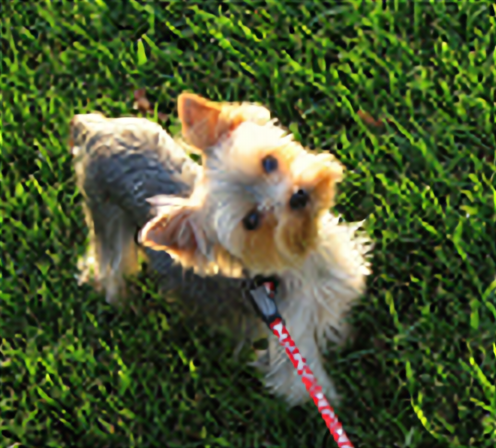} &
\includegraphics[width=0.16\textwidth]{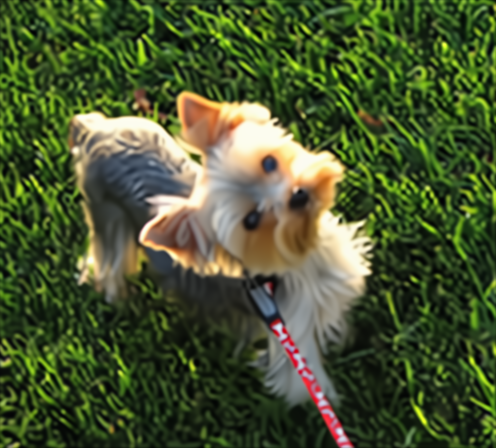} &
\includegraphics[width=0.16\textwidth]{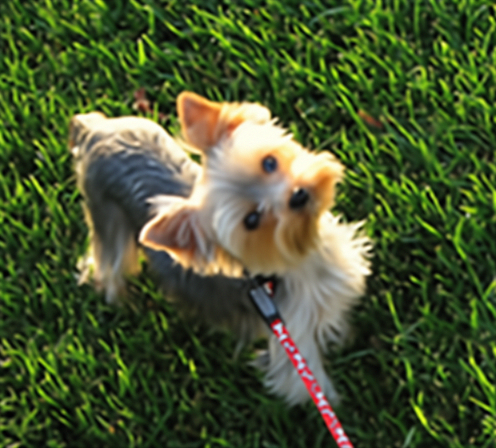} &
\includegraphics[width=0.16\textwidth]{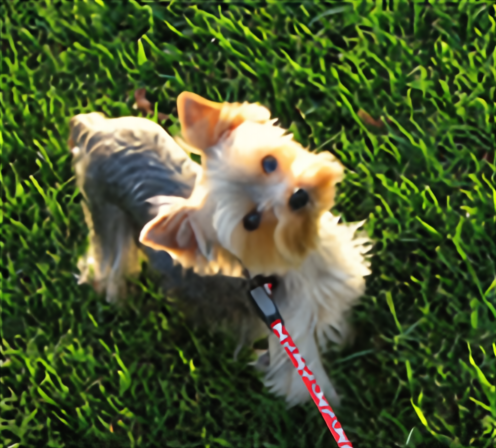} &
\includegraphics[width=0.16\textwidth]{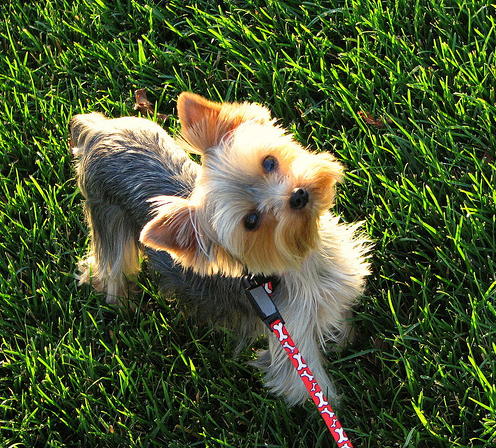}  \\
\end{tabular}
\caption{SRx4 on Imnet100, $\sigma = 0.01$.}
\label{fig:sr_sup}
\end{figure*}
\clearpage

\section{Additional visual results}

We provide further reconstruction results on the Gaussian deblurring problem in \autoref{fig:gaussian_sup} and on the SRx4 problem in \autoref{fig:sr_sup}.

\section{Influence of the degradation}

The choice of degradation operator significantly impacts reconstruction quality.
In~\cref{fig:ablation_study}, we give reconstruction metrics for varying degradation strengths using either a Gaussian Restormer prior and a DRUNet denoising prior.
This shows that a minimum degradation is required to stabilize the reconstruction, while increasing beyond a certain threshold leads to excessive smoothing.

\begin{figure}[h]
    \centering
    \begin{minipage}{0.18\linewidth}
        \centering
        \vspace{-1em}
        \includegraphics[width=\linewidth]{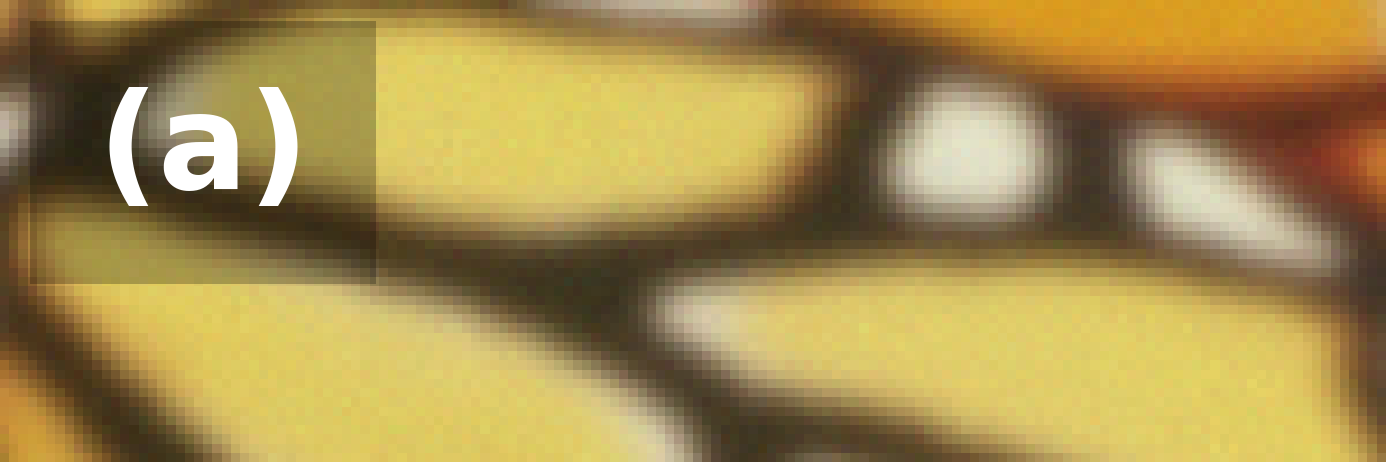} \\[2pt]
        \includegraphics[width=\linewidth]{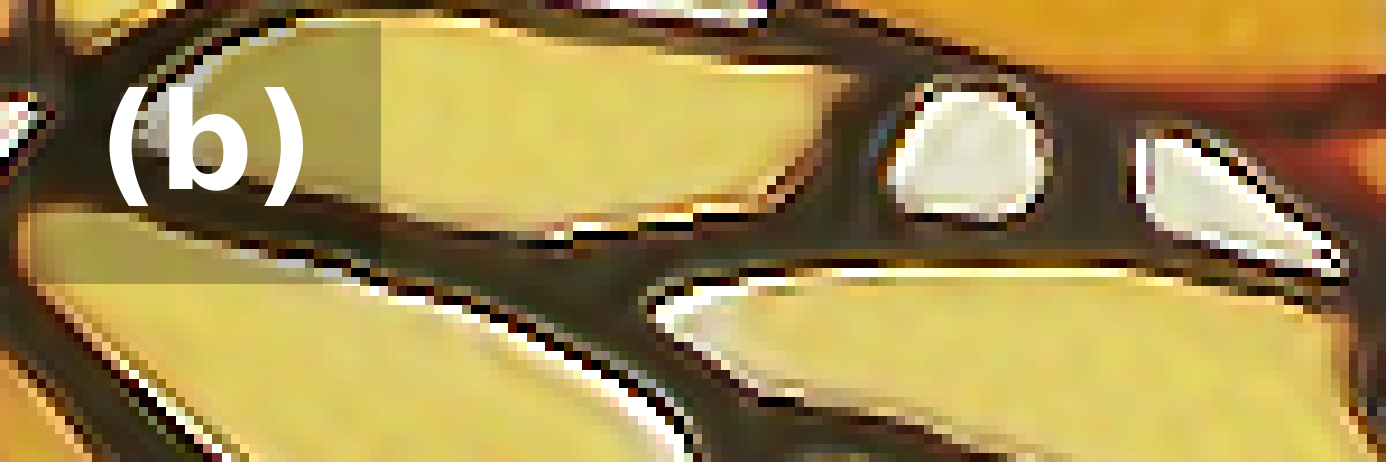} \\[2pt]
        \includegraphics[width=\linewidth]{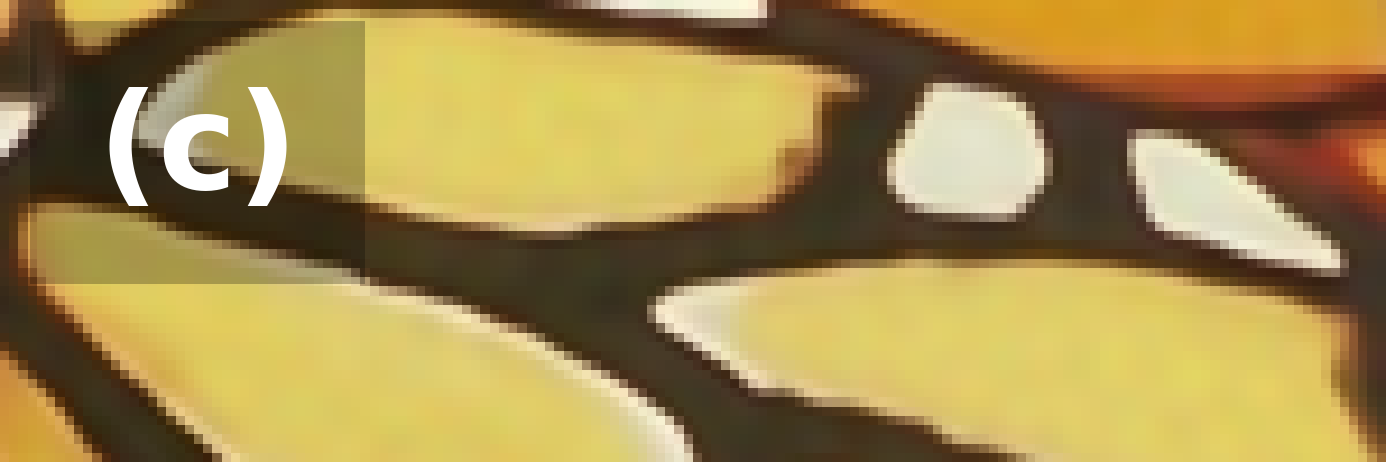} \\[2pt]
        \includegraphics[width=\linewidth]{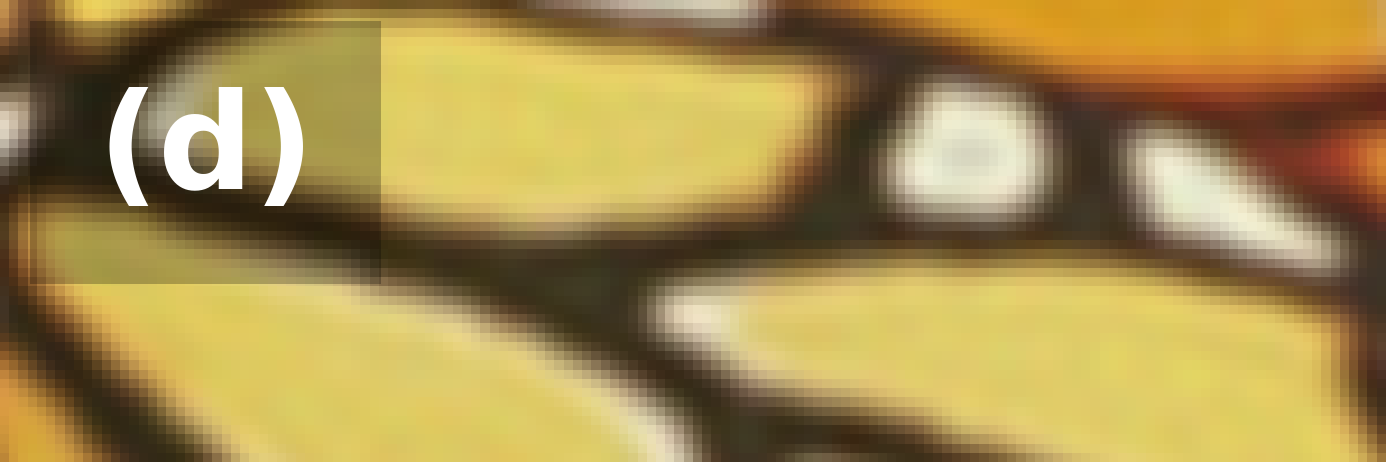}
    \end{minipage}%
    \hfill
    \begin{minipage}{0.58\linewidth}
        \centering
        \includegraphics[width=\linewidth]{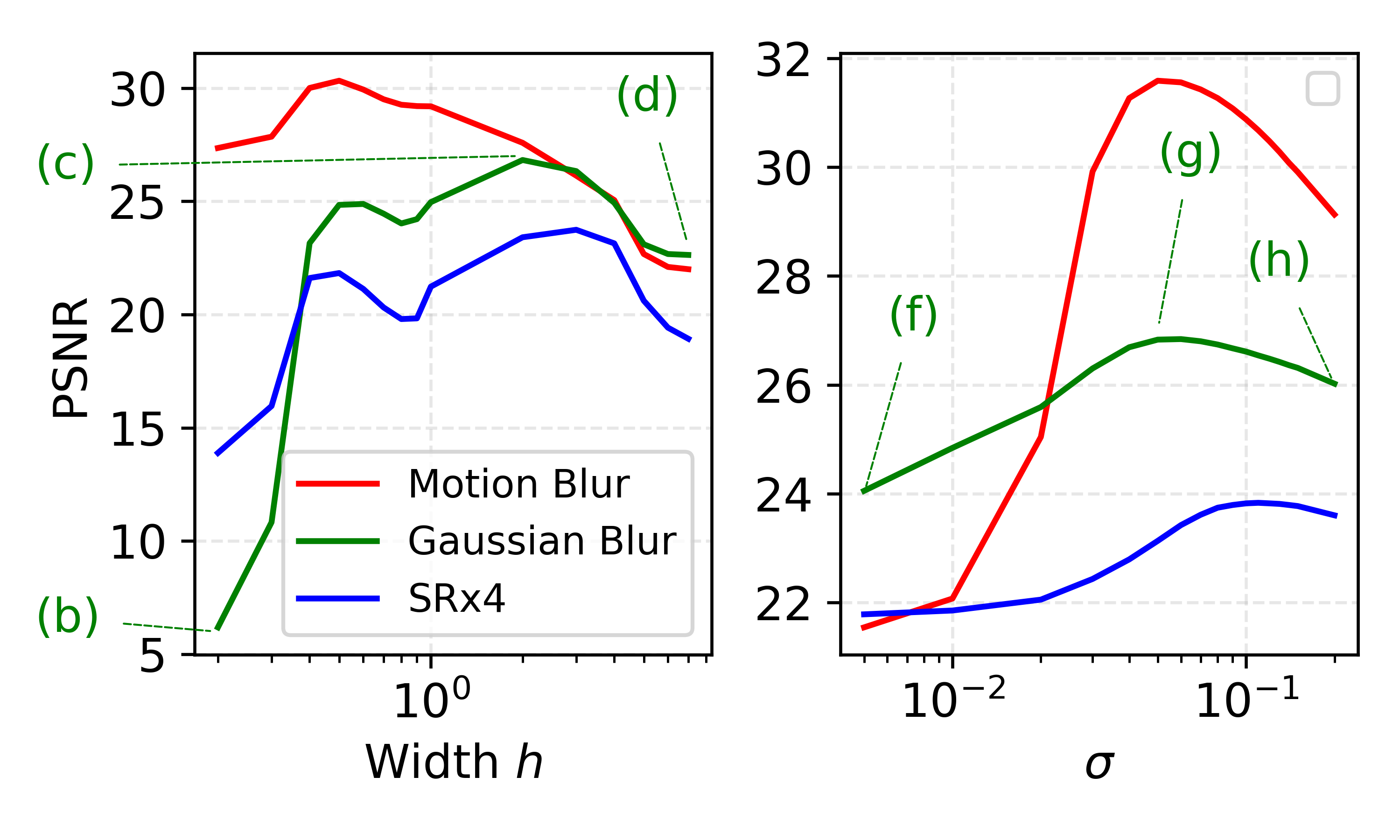}
    \end{minipage}%
    \hfill
    \begin{minipage}{0.18\linewidth}
        \centering
        \vspace{-1em}
        \includegraphics[width=\linewidth]{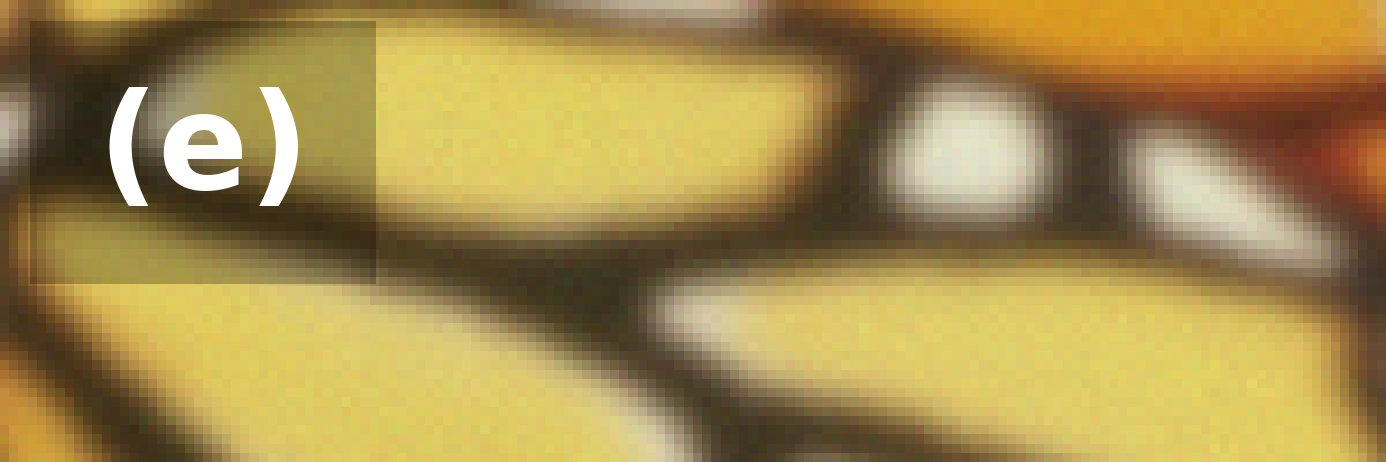} \\[2pt]
        \includegraphics[width=\linewidth]{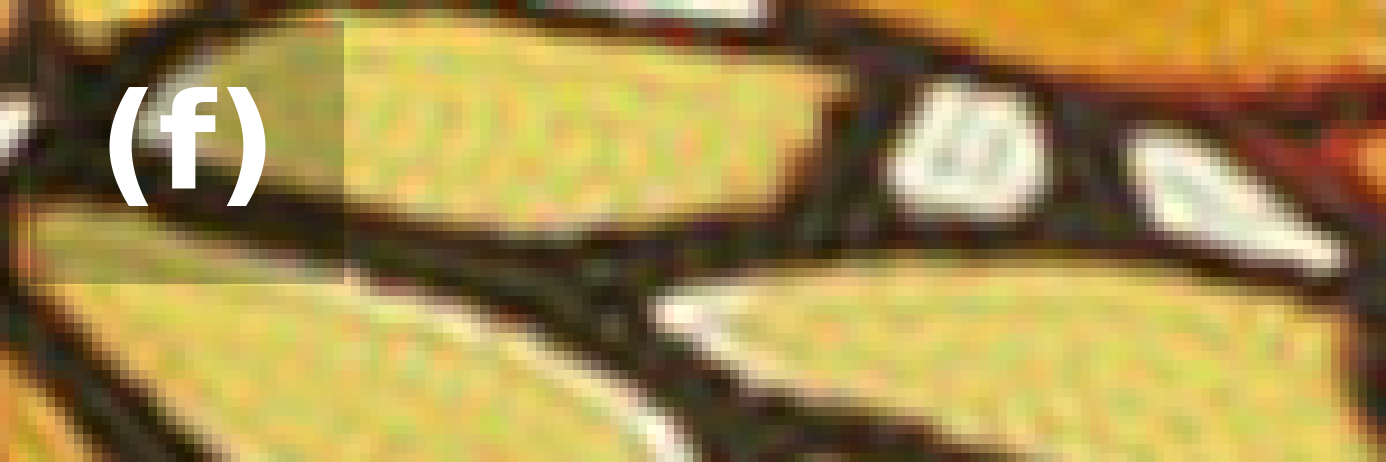} \\[2pt]
        \includegraphics[width=\linewidth]{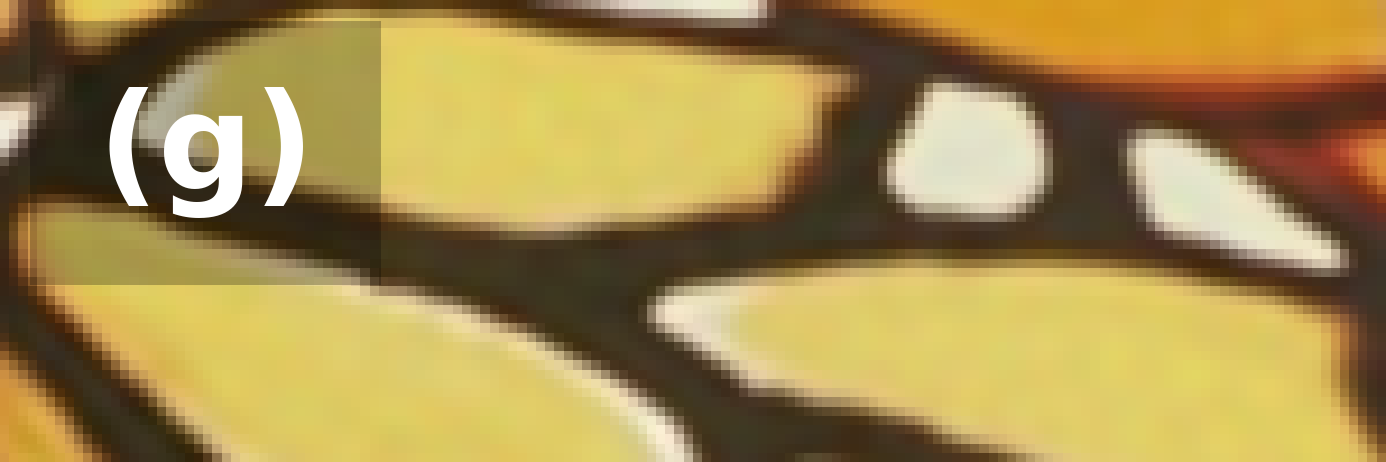} \\[2pt]
        \includegraphics[width=\linewidth]{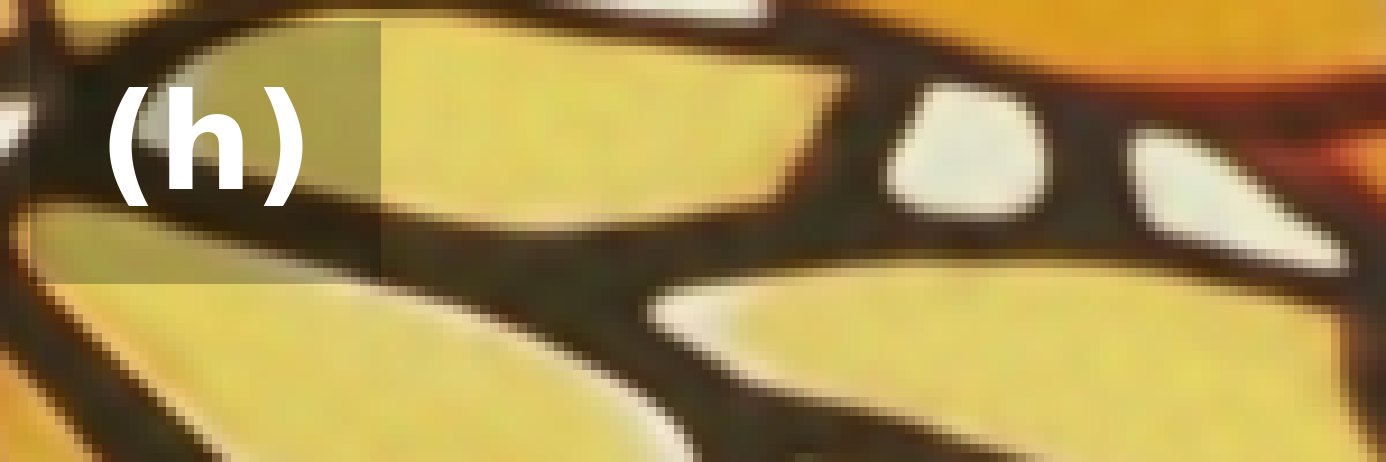}
    \end{minipage}%
    \vspace{-1em}
    \caption{\small Influence of the degradation operator on the FiRe restoration algorithm on the Set3C algorithm for different problems. Left plot: FiRe with Restormer Gaussian prior. We run the algorithm with different kernel widths. Reconstructions (details) corresponding to points (b), (c), (d) on the plot are shown on the left. Right plot: same, but with a DRUNet denoiser restoration prior; (f), (g) and (h) panels show the associated reconstructions. (a) and (e) show the degraded measurements $y$.}
    \label{fig:ablation_study}
\end{figure}

\end{document}